\newcommand{\competitorHCDSHilbert}{\texttt{Curve-H}\xspace}
\newcommand{\competitorHCDS}{\texttt{Curve-Z}\xspace}
\newcommand{\competitorEHCDSHilbert}{\texttt{Curve-H+E}\xspace}
\newcommand{\competitorEHCDS}{\texttt{Curve-Z+E}\xspace}
\newcommand{\competitorHCDSStd}{\texttt{Curve-Z (std)}\xspace}
\newcommand{\competitorZCurve}{\texttt{Curve-Z}\xspace}
\newcommand{\competitorCGALR}{\texttt{CGAL-R}\xspace}
\newcommand{\competitorCGALkd}{\texttt{CGAL-KD}\xspace}
\newcommand{\competitorBOOSTl}{\texttt{Boost-L}\xspace}
\newcommand{\competitorBOOSTq}{\texttt{Boost-Q}\xspace}
\newcommand{\competitorBOOSTR}{\texttt{Boost-R*}\xspace}
\newcommand{\competitorPAM}{\texttt{PAM-R}\xspace}
\newcommand{\competitorPKD}{\texttt{PkD}\xspace}
\newcommand{\cpp}{\textsf{C}\texttt{++}\xspace}
\newtheorem{problems}{Problem Statement}
\title{\Large Simpler is Faster:  \\
Practical Distance Reporting by Sorting Along a Space-Filling Curve}
\titlerunning{Simpler is Faster}
\keywords{space-filling curve, distance reporting, range searching}
\author{Sarita de Berg}{IT University of Copenhagen, Denmark}{debe@itu.dk}{https://orcid.org/0000-0001-5555-966X}{}
\author{Emil Toftegaard Gæde}{Technical University of Denmark, Denmark}{etoga@dtu.dk}{}{}
\author{Ivor van der Hoog}{IT University of Copenhagen, Denmark}{ivva@itu.dk}{https://orcid.org/0009-0006-2624-0231}{}
\author{Henrik Reinst\"adtler}{Heidelberg University, Germany}{henrik.reinstaedtler@informatik.uni-heidelberg.de}{https://orcid.org/0009-0003-4245-0966}{}{}
\author{Eva Rotenberg}{IT University of Copenhagen, Denmark}{erot@itu.dk}{0000-0001-5853-7909 }{}{}
\authorrunning{S. de Berg, E. T. Gæde, I. van der Hoog, H. Reinst\"adtler, and E. Rotenberg}
\begin{document}

\maketitle

\begin{abstract} 
Range reporting is a classical problem in computational geometry. A (rectangular) reporting data structure stores a point set~$P$ of~$n$ points, such that, given a (rectangular) query region~$\Delta$, it returns all points in~$P \cap \Delta$. A variety of data structures support such queries:
\begin{itemize}[noitemsep]
\item $k$-d trees support queries in $O(\sqrt{n} + |P \cap \Delta|)$ expected time,
\item range trees support queries in $O(\log^2 n + |P \cap \Delta|)$ worst-case time,
\item $R$-trees support queries in $O(\log n + |P \cap \Delta|)$ expected time.
\end{itemize}

A common variant of range queries are  \emph{distance reporting queries}, where the input is a query point~$q$ and a radius~$\delta$, and the goal is to report all points in~$P$ within distance~$\delta$ of~$q$. Such queries frequently arise as subroutines in geometric data structures. Practical implementations typically answer distance queries through rectangular range queries using the data structures listed above.

This paper revisits a simple and practical heuristic for distance reporting, originally proposed by Asano, Ranjan, Roos, and Welzl (TCS '97): sort the input point set~$P$ along a space-filling curve. Queries then reduce to scanning at most four contiguous ranges along the sorted curve.

The fact that sorting along a space-filling curve is beneficial for range reporting is well-known. Many implementations use this technique to speed up their practical query and construction times. The point that this paper makes is subtle, but interesting: 
we argue that often-times, it is the space-filling curve rather than the overall data structure that provides the performance benefits. 
Thus, we offer a simple but effective alternative: only sort $P$ along a space-filling curve instead.

We compare this approach to eight range searching implementations, across an elaborate test suite of real-world and synthetic data. Our experiments confirm this simple 200-line code approach out-performs all high-end implementations in terms of space usage and construction time. 
It presents almost always the best query times.
In a dynamic setting, our approach dominates in performance. 
\end{abstract}

\newpage

\section{\texorpdfstring{Introduction \\}{Introduction.}}
\emph{Distance reporting queries} consist of a point $q$ and a distance $\delta$. The goal is to output all points of the input $P$ within distance $\delta$ of $q$. 
This is a classical problem that frequently arises in application domains. In recent years,  several implementation papers use a distance reporting data structure as a subroutine~\cite{Ali2018Maximum,buchin2025roadster,Gudmundsson2021Practical, Gudmundsson2023Practical,Kim2024SGIR, Hoog2025Efficient, Yershova2007Improving}.
We show a simple technique for efficient and practical distance reporting. We compare our technique to state-of-the-art data structures to show that our approach is not only simpler, but even competitive in a static setting and more efficient dynamically. 
In full generality, our theoretical results apply to $\mathbb{R}^d$, but we focus our implementation and experiments on the plane. 

Distance reporting queries can be reduced to range reporting queries, and all related work primarily considers range queries. Under the $L_1$ or $L_\infty$ metric, any rectangle reporting data structure can answer distance queries. Under the $L_2$ metric, one can either reduce the problem to rectangular distance queries in $\mathbb{R}^3$, or, query with a bounding square and filter the results. Whilst the latter technique has no asymptotic performance guarantees, it is common in practice because range queries become inefficient as the dimension increases.

\subparagraph{Rectangular range reporting in theory. }
A wide body of literature exists on  (planar) range reporting.  Let $k$ denote the output size of a (rectangular) range query.
The classical \emph{range tree} by Bentley~\cite{bentley1979multidimensional} is a two-level data structure: the input points are first stored in a balanced binary search tree sorted by $x$-coordinate. For each internal node, a secondary tree is built over the $y$-coordinates of the points in its subtree. This structure supports orthogonal range queries in worst-case $O(\log^2 n + k)$ time, and updates in $O(\log^2 n)$ time. The range tree requires $O(n \log n)$ space.
The \emph{k-$d$ tree}, proposed by Bentley and Saxe~\cite{bentley1975multidimensional}, is a binary search tree that alternately splits space along the coordinate axes. It is simple and practical, with expected query time $O(\sqrt{n} + k)$ in two dimensions~\cite{friedman1977algorithm}. However, its worst-case performance is linear in $n$.
The \emph{quadtree}, introduced by Finkel and Bentley~\cite{finkel1974quad}, recursively partitions the plane into four quadrants. 
We discuss more related work on quadtrees (and Morton encodings) in Section~\ref{sec:discussion_quadtree}.
The \emph{$R$-tree}~\cite{guttman1984r} offers the most practical performance. It is  a balanced, hierarchical data structure widely used in spatial databases that recursively groups objects into minimum bounding rectangles, forming a tree from bottom to top. $R$-trees offer an expected query and update time $O(\log n + k)$, but their  worst-case query time is $O(n)$.

\subparagraph{Optimising $R$-trees.}
Due to their practical performance, numerous strategies have been proposed to improve the efficiency of $R$-trees. Intuitively, each node in an $R$-tree stores a bounding rectangle that encloses several child rectangles. When a node contains too many children, it is split into two new nodes using a specified \emph{split strategy}.
Different $R$-tree variants implement different heuristics for choosing the split:
\emph{Linear $R$-trees} select two entries that are farthest apart and they guarantee that these two elements are in different new nodes (partitioning the remaining children across these two new nodes). 
This strategy typically has the fastest construction and update times, but yields poorer query performance.
\emph{Quadratic $R$-trees} choose the pair of children whose combination would cause the most overlap, attempting to separate them. This improves query efficiency over the linear variant at the cost of increased insertion time.
The \emph{$R^*$-tree}~\cite{beckmann1990r} uses a more complex strategy that includes reinserting entries when a good split cannot be found. Although more costly to build and update, this variant improves query performance.

\subparagraph{Space-filling curves.}
A \emph{space-filling curve} (SFC) is a bijective mapping from a one-dimensional interval to a higher-dimensional space. An SFC is \emph{locality-preserving} if consecutive points are nearby in the metric space. The classic example is the Hilbert curve~\cite{hilbert1891}.
Locality-preserving SFCs can improve both the static and dynamic performance of $R$-trees. During construction, the input can be sorted along the SFC, and then inserted into the $R$-tree in that order. Because consecutive nodes along the curve are nearby, they tend to be placed into the same leaf nodes. This leads to smaller, tighter minimum bounding rectangles, which improves both query and update efficiency. Additionally, the spatial coherence induced by the SFC improves cache performance for the construction and queries.

\subsection{Implementations of range searching data structures }

We compare our implementation to eight existing range-searching data structures (Table \ref{tab:implementations}).
From \texttt{CGAL}~\cite{cgal}, we include their range tree, $k$-d tree, and $R$-tree implementations.
From the \texttt{Boost} geometry library~\cite{boostgeometry}, we include its three highly optimized $R$-tree variants (linear, quadratic, and $R^*$ packing), which all three internally use Hilbert-curve–based packing to speed up their computations.  Due to its performance, \texttt{Boost} serves as our baseline. 
Kriegel and Schiwietz~\cite{Kriegel1989Performance} benchmarked in 1989 the state-of-the-art range reporting approaches. This result is too old to be competitive, and we do not compare against it, but it provided the framework for synthetic data testing that was adopted by future works~\cite{Arge2004Priority,SunBlelloch2019ALENEX,Wang2022ParGeo,Qi2020Packing}.

Beyond standard libraries, we consider parallel or research implementations from the literature. Sun and Blelloch~\cite{SunBlelloch2019ALENEX} propose the \texttt{PAM} library and its augmented \texttt{range tree}, designed for fast parallel and sequential range queries.
The authors of~\cite{Wang2022ParGeo} introduce \texttt{ParGeo}, whose $k$-d tree focuses on parallel scalability.
Men, Shen, Gu, and Sun~\cite{Men2025SIGMOD} present the \texttt{PkD-tree}, a recent parallel $k$-d tree variant.
We include these three parallel data structures, evaluating their sequential performance for a fair comparison.
Note that two of these papers~\cite{SunBlelloch2019ALENEX, Men2025SIGMOD} claim to have competitive to strong sequential performance. 

We mention two additional works that focus on I/O operations rather than runtime. Arge, de Berg, Haverkort, and Yi~\cite{Arge2004Priority} provide an $R$-tree implementation that is designed to minimise the number of I/O operations performed. Qi, Tao and Chang~\cite{Qi2020Packing} follow up this work, providing several $R$-tree implementations designed to minimise the number of I/O's used. The experimental setup in these papers is different, as they measure the number of block I/O operations their data structures use (using a block size of 4MB). They do not optimise for runtime performance, and are not competitive on the runtime metric. 

\begin{table}[h]
\centering
\begin{tabular}{@{}lll@{}}
\toprule
\textbf{\texttt{Name}} & \textbf{Brief description} & \textbf{Citation} \\ 
\midrule

\texttt{Boost-L} & Boost’s $R$-tree using linear packing. & \cite{boostgeometry} \\

\texttt{Boost-Q} & Boost’s $R$-tree using quadratic packing. & \cite{boostgeometry} \\

\texttt{Boost-R*} & Boost’s $R$-tree using $R^*$ packing. & \cite{boostgeometry} \\

\texttt{CGAL-Kd} & CGAL’s $k$-d tree for range queries. & \cite{cgal} \\

\texttt{CGAL-RTree} & CGAL’s implementation of an $R$-tree. & \cite{cgal} \\

\texttt{PAM} & Augmented range tree. & \cite{SunBlelloch2019ALENEX} \\

\texttt{ParGeo} & Parallel $k$-d tree. & \cite{Wang2022ParGeo} \\

\texttt{PkD-tree} & Parallel $k$-d tree. & \cite{Men2025SIGMOD} \\

\bottomrule
\end{tabular}
\caption{Range-searching implementations included in our comparison.}
\label{tab:implementations}
\end{table}

\subsection{Technical contribution}

Asano, Ranjan, Roos, and Welzl~\cite{ASANO19973} study point sets in $\mathbb{Z}^2$ that lie within some axis-aligned bounding box $D \subset \mathbb{Z}^2$. They define a \emph{recursive space-filling curve} (RSFC) as a bijection between $[0, \#D] \cap \mathbb{Z}$ and $D$, where $\#D$ denotes the number of grid points in~$D$.  
A recursive space-filling curve must satisfy two conditions:  
(i) $D$ can be partitioned into four disjoint equal-sized squares such that the inverse mapping~$\pi$ of the recursive space-filling curve maps each square to a contiguous interval; and  
(ii) the restriction of the RSFC to each square is itself an RSFC.  
They prove (Figure~\ref{fig:technique_example}) that for any recursive space-filling curve in $\mathbb{R}^2$ and any query square~$Q$, there exist at most $\mu \le 4$ squares $H_1,\ldots,H_\mu$ such that  
(a) the union of the $H_i$ covers $Q$;  
(b) the total area of the $H_i$ is at most $2\mu$ times the area of~$Q$; and  
(c) the inverse image of each $H_i$ is a contiguous interval.  
Although they briefly remark on the practical potential of this property, their contribution is theoretical: they show the existence of a recursive space-filling curve with $\mu = 3$, which they prove is optimal. Haverkort~\cite{HaverkortSFC} generalized many recursive space-filling curves to $d$-dimensional space.

We provide a moderate simplification and generalization of the framework of~\cite{ASANO19973} similar to~\cite{HaverkortSFC}, but streamlined for our setting. 
Consider any RSFC in $\mathbb{Z}^d$ and let $T$ be the complete quadtree over a hypercube domain $D \subset \mathbb{Z}^d$, where the leaves of $T$ are all unit hypercubes centered at integer coordinates.  
We show that a RSFC can equivalently be viewed as a function assigning to each quadtree cell an integer label such that the label of each parent is a prefix of the labels of its children.  
Let $\pi$ denote the induced mapping from $D$ to $[0,\#D] \cap \mathbb{Z}$.  
For any quadtree cell, all points it contains receive values under $\pi$ that share the same prefix.  
We prove that for any hypercube~$Q \subset D$, a RSFC generates at most $\mu = 2^d$ hypercubes $H_1,\ldots,H_{2^d}$ satisfying properties analogous to those in~\cite{ASANO19973}: $Q$ is covered by the $H_i$, the total volume of the $H_i$ is at most $2^{d+1}$ times that of~$Q$, and each $H_i$ maps to a contiguous interval.

This yields a simple data structure: store $P$ in an array sorted by $\pi(p)$.  
Given a distance query $(q,\delta)$, our (imaginary) quadtree yields hypercubes $H_1,\ldots,H_{2^d}$ that cover the query hypercube~$Q$.  
Since $\pi(H_i)$ is a contiguous interval, the points $H_i \cap P$ correspond exactly to a contiguous subarray.
Thus, each query reduces to scanning at most $2^d$ disjoint intervals.  
Our data structure therefore consists solely of an array.  
Since $\pi$ is static, the structure can be made dynamic by storing $P$ in a B-tree instead (see Section~\ref{sec:dynamic}).

\subparagraph{On novelty.}
We emphesise that idea of using space-filling curves for range reporting is not new.  
Tropf and Herzog~\cite{tropf1981multidimensional} already described range-reporting techniques on data sorted by Morton codes.  
Across multiple communities—spatial databases, computer graphics, and scientific computing—numerous works exploit space-filling curves to improve spatial indexing and traversal efficiency~\cite{Karras2012Octrees, lewiner2010fast, Morrical2017Parallel, sugiura2023zordered}.  
Furthermore, Qi, Tao and Chang~\cite{Qi2020Packing} explicitly evaluate different SFC-based packing strategies for $R$-trees. Our claim to novelty is therefore sublte:

Prior works consistently uses SFC orderings as a component inside more elaborate data structures such as $R$-trees, octrees, or pointerless spatial trees.  
We argue that, for distance queries, these additional structures are unnecessary and using only the ordering is sufficient. 
To the best of our knowledge, after an extensive literature search, we believe that we are the first practical implementation that treats a SFC-orderings a stand-alone structure.

\begin{figure}[h]
    \centering
    \includegraphics[width = \linewidth]{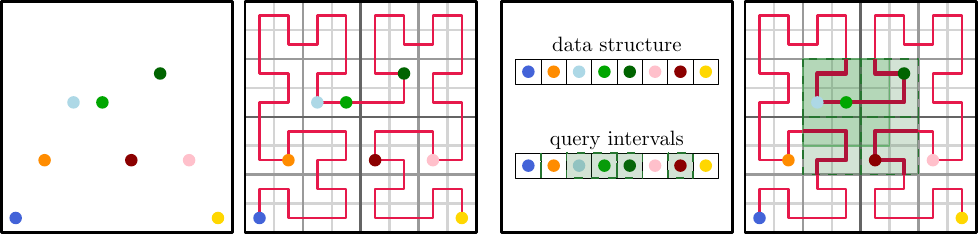}
    \caption{A Hilbert curve visits quadtree cells in a fixed order, ordering the input.}
    \label{fig:technique_example}
\end{figure}

\newpage

\subsection{Experiments and results}
We implement both planar static and dynamic versions of our technique, using the Hilbert and Z-curves. 
Our Z-order implementation generalises to $\mathbb{R}^d$ via $d$-dimensional Morton encodings. 
We compare our method against the state-of-the-art range reporting structures.

\subparagraph{Results and contribution.}
Our approach is deliberately simple and can be implemented in fewer than 200 lines of code, offering a practical advantage over more elaborate state-of-the-art structures. Using a range of synthetic and real-world data sets, we evaluate both the static and dynamic performance of modern distance-query structures (see Table~\ref{tab:implementations}). Preliminary experiments indicate that \texttt{CGAL-Kd}, \texttt{CGAL-RTree}, and \texttt{ParGeo} perform poorly compared with the remaining implementations, and we therefore exclude them from further evaluation.

In the static setting, where no points are inserted into or deleted from the data structure, we compare our approach to the state-of-the-art in terms of construction time, query time, and memory usage. We conclude that both \texttt{PAM} and \texttt{PkD-tree} perform significantly worse than the \texttt{Boost} variants and our implementations, both in terms of construction time and query time, despite claims presented in~\cite{SunBlelloch2019ALENEX} and \cite{Men2025SIGMOD} that these implementations are competitive even in a sequential setting. In terms of construction time, our approach outperforms all other implementations, due to the simplicity of our data structure.

For query times, the comparison between our approach and the \texttt{Boost} library is more nuanced (all other implementations are again not competitive). On the synthetic data sets, our structures achieve lower (better) query times than the \texttt{Boost} variants for query-window side lengths up to 0.001 times the bounding-box size. For larger windows, the highly optimised \texttt{Boost} library becomes slightly faster. In this regime, the running time is dominated by reporting the points in range, and the single \texttt{if}-statement we use to verify whether a point lies in the query window becomes the bottleneck.
On many real-world data sets, uniformly random queries often produce empty query boxes. Because the \texttt{Boost} implementations include an early-termination mechanism for this case, they outperform our baseline implementation in this setting. However, when we augment our structure with a lightweight early-termination component, our query times become significantly better than those of all \texttt{Boost} variants on almost all real-world data sets. This augmentation increases construction overhead, but for large inputs our construction remains the fastest. We note that this augmentation is not applicable in the dynamic setting.

In a dynamic setting, where points can be inserted into and deleted from the data set, we compare our approach with the three \texttt{Boost} variants; the other implementations in Table~\ref{tab:implementations} are either not competitive or do not support dynamic updates. In contrast to the static setting, the three \texttt{Boost} variants behave quite differently: the linear variant offers faster updates, while the $R^*$ variant achieves faster queries. 
In our dynamic experiments, we first insert a set of points, then delete half of them, and finally perform a number of queries. We measure the total time required for all updates and queries, for varying point-set sizes and numbers of queries. In this setting, our approach clearly dominates: our implementation is at least three times faster than all \texttt{Boost} variants in every experiment.

We thus show a simple but practically efficient technique for distance queries. 
We match or outperform the state-of-the-art in a static setting, and outperform in a dynamic setting. 
Our experiments thereby support a perhaps surprising conclusion: whilst several sophisticated data structures use space-filling curves to accelerate their performance, the simple act of sorting the input point set $P$ along such a curve is practically sufficient.

\section{\texorpdfstring{Space-filling curves and data structure}{Space-filling curves and data structure}}
In full generality, a \emph{space-filling curve} (SFC) is a bijection from a one-dimensional interval to a higher-dimensional space. In the context of computer science, however, we deal with discrete and bounded domains due to finite precision. Following~\cite{ASANO19973}, we assume that our domain $D$ is the integer grid $[1, 2^\omega]^d \cap \mathbb{Z}^d$ ($\omega$ denotes the number of bits per coordinate).

\begin{problems}
    Our input is a point set $P \subseteq [1, 2^\omega]^d \cap \mathbb{Z}^d$. We wish to store $P$ to support \emph{distance queries}. Each query consists of a point $q$ and a radius $\delta$, and the goal is to return all points in $P \cap B_\delta(q)$, where $B_\delta(q)$ denotes the ball of radius $\delta$ centered at $q$.
\end{problems}

\subparagraph{Recursive space-filling curves.}
Since the domain $D$ is finite and discrete, an SFC can be regarded as an ordering of $D = [1, 2^\omega]^d \cap \mathbb{Z}^d$.
Asano, Ranjan, Roos, and Welzl~\cite{ASANO19973} propose a heuristic for distance queries using a \emph{recursive space-filling curve} (RSFC). We present and implement their approach. First, we show their two-dimensional definition. We then propose an equivalent, generalised formulation using quadtrees, which is easier to implement.

\begin{definition}[RSFC in~\cite{ASANO19973}]
    \label{def:old}
    Let $D$ be a square domain in $\mathbb{Z}^2$ containing $m$ grid points, where $m$ is a power of two. A space-filling curve $\sigma$ is \emph{recursive} if there exist four equal-sized squares $S_1, S_2, S_3, S_4$ such that:
    \begin{itemize}[noitemsep]
        \item The first $m/4$ points of $\sigma$ are contained in $S_1$,  
        the next $m/4$ in $S_2$, and so on.
        \item $\forall i$, the restriction of $\sigma$ to $S_i$ is itself a RSFC.
    \end{itemize}
\end{definition}

\begin{definition}[Complete quadtree]
    Let $F \subseteq \mathbb{R}^d$ be a hypercube whose diameter is a power of two. The \emph{complete quadtree} $T_F$ is defined recursively as follows:
    \begin{itemize}[noitemsep]
        \item If $F$ is not a unit cube, partition $F$ into $2^d$ equal-sized hypercubes.
        \item Recurse on each hypercube to construct the full tree.
    \end{itemize}
     A \emph{level} $T^\ell_F$ of $T_F$ consists of  hypercubes of diameter~$2^\ell$. 
\end{definition}

\begin{definition}[RSFC in $\mathbb{Z}^d$]
    \label{def:recursive_SFC}
    Let the domain $D$ be a hypercube in $[1, 2^\omega] \cap \mathbb{Z}^d$ with diameter~$2^\omega$.
    Let $F$ equal $D$, translated by $- \frac{1}{2}$ in each cardinal direction. 
    A \emph{recursive space-filling curve} $\pi$ over $D$ is a function that assigns, for all levels $\ell \in [\omega]$, to each cube $C \in T_F^\ell$ a unique bit string $\pi(C)$ of length $d \cdot \ell$, s.t.:
    \begin{itemize}[noitemsep]
        \item Each cube $C$ has a unique string $\pi(C)$, and
        \item $\forall C' \in T_F$ with $C' \subset C$, $\pi(C)$ is a prefix of $\pi(C')$.
    \end{itemize}
    Our RSFC $\pi$ assigns  $\forall p \in D$ the value $\pi(p) := \pi(C)$, where $C$ is the unique unit cube in $T_F$ containing~$p$.
\end{definition}

\begin{restatable}[Proof in the appendix]{lemma}{equal}
\label{lem:equal}
   There is a one-to-one correspondence between the orderings $\sigma$ of $D$ from Definition~\ref{def:old} and the mappings $\pi$ of $D$ from Definition~\ref{def:recursive_SFC}.
\end{restatable}

\subparagraph{Distance queries.}
Definition~\ref{def:recursive_SFC} allows for a new formulation and implementation of the distance query heuristic of~\cite{ASANO19973}.
Each point $p \in P$ is assigned a bitstring $\pi(p)$ of length $d \cdot \omega$. For any hypercube $C \in T_F$, we define $\pi(C) \circ \overline{0}$ as the string obtained by appending zeroes to $\pi(C)$ until the total length is $d \cdot \omega$; we define $\pi(C) \circ \overline{1}$ analogously.

\begin{observation}
For any $C \in T_F$ and $q \in C \cap \mathbb{Z}^d$, we have:
$
\pi(C) \circ \overline{0} \leq \pi(q) \leq \pi(C) \circ \overline{1}.
$
\end{observation}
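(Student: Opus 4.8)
The plan is to deduce this directly from the prefix property of $\pi$ in Definition~\ref{def:recursive_SFC}, combined with an elementary comparison of equal-length bit strings. Throughout, let $C_q$ denote the unique unit cube of $T_F$ that contains $q$, so that by construction $\pi(q) = \pi(C_q)$ is a bit string of length $d\omega$, and recall that $\pi(C)$ has length at most $d\omega$.

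The first step is to show that $q \in C$ forces $C_q \subseteq C$ as cells of $T_F$. This is the one place where the translation of $D$ by $-\frac{1}{2}$ in each cardinal direction (Definition~\ref{def:recursive_SFC}) plays a role: it places each integer point of $D$ at the \emph{center} of a unit cube of $T_F$, so $q$ lies in the interior of $C_q$ and on the boundary of no cube of $T_F$; in particular $q$ lies in the interior of $C$. Since the cells of a quadtree are pairwise nested or interior-disjoint and $C_q$ is a minimal cell, $q \in \mathrm{int}(C_q) \cap \mathrm{int}(C)$ forces $C_q \subseteq C$ (with $C_q = C$ precisely when $C$ is itself a unit cube).

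With $C_q \subseteq C$ established, Definition~\ref{def:recursive_SFC} gives that $\pi(C)$ is a prefix of $\pi(C_q) = \pi(q)$. Writing $\ell := |\pi(C)| \le d\omega$ and $\pi(q) = \pi(C)\,w$ with $|w| = d\omega - \ell$, we have by definition $\pi(C)\circ\overline{0} = \pi(C)\,0^{d\omega-\ell}$ and $\pi(C)\circ\overline{1} = \pi(C)\,1^{d\omega-\ell}$, so all three strings have length exactly $d\omega$ and agree on their first $\ell$ bits. Comparing bit strings of equal length as integers (equivalently, lexicographically), the common prefix cancels and the claimed chain $\pi(C)\circ\overline{0} \le \pi(q) \le \pi(C)\circ\overline{1}$ reduces to $0^{d\omega-\ell} \le w \le 1^{d\omega-\ell}$, which is immediate. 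The only point I would be careful to spell out is the boundary argument of the second paragraph — that the half-unit shift keeps grid points off all cube boundaries, so that membership $q\in C$ really does imply cell containment $C_q \subseteq C$; everything else is routine prefix bookkeeping.
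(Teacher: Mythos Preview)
Your argument is correct. The paper states this as an \emph{Observation} without proof, so there is no reference argument to compare against; your proof is exactly the natural justification the paper implicitly relies on (prefix property of Definition~\ref{def:recursive_SFC} plus the fact that the $-\tfrac{1}{2}$ shift keeps integer points off cube boundaries, so $q\in C$ forces $C_q\subseteq C$).
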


\begin{definition}
Given a query hypercube $Q$, we define $\texttt{cells}(Q)$ as the hypercubes $C_i \in T_F$ satisfying:
 $|Q| \leq |C_i| < 2|Q|$, and  $C_i \cap Q \neq \emptyset$.
\end{definition}

Let $A_P$ be an array storing all $p \in P$, sorted by increasing $\pi(p)$. Then the distance query heuristic (see Algorithm~\ref{alg:query}) proceeds as follows:
Given a query $(q, \delta)$, compute the smallest axis-aligned hypercube $Q$ that contains $B_\delta(q)$. Compute $\texttt{cells}(Q)$ as defined above. For each $C_i \in \texttt{cells}(Q)$, perform a binary search on $A_P$ to find the subarray between $\pi(C_i) \circ \overline{0}$ and $\pi(C_i) \circ \overline{1}$. Iterate over each point $p'$ in that subarray, returning those for which $\|p' - q\| \leq \delta$.

\subparagraph{Efficiency. }
The union of all cubes in $\texttt{cells}(Q)$ covers a region whose volume is at most $2 \cdot 2^d$ times that of $Q$. At first glance, this may suggest inefficiency. In the worst case, $\Theta(n)$ points could lie within $\texttt{cells}(Q)$ even if $Q$ itself contains no points from $P$, yielding an $O(n)$ additive overhead.
However, several practical factors mitigate this cost:
First, the algorithm iterates only over the points in $P$, not the entire grid. Thus if $P$ is sparse, then $\texttt{cells}(Q) \cap P$ is likely to be small. Second, all range scans occur over contiguous memory, leading to highly efficient cache behaviour—often outperforming structures like binary search trees in practice.

\begin{algorithm}
    \caption{\texttt{Query}(point $q$, distance $\delta$, array $A_P$)}
    \label{alg:query}
    \begin{algorithmic}
    \STATE $Q \gets$ bounding box $B_\delta(q)$.  
    \STATE output $\gets \emptyset$
    \FOR{$C_i \in \texttt{cells}(Q)$}
        \STATE $x \gets$ $A_P$.\texttt{lower\_bound}($\pi(C_i) \circ \overline{0}$)
        \WHILE{ $\pi(A[x]) < \pi(C_i) \circ \overline{1}$}
        \IF{$\| A[x] - q \| < \delta$}
        \STATE output.append($x$)
        \ENDIF
        \STATE x++
        \ENDWHILE
    \ENDFOR
    \RETURN output
    \end{algorithmic}
\end{algorithm}

\subsection{Implementation details}
Let the domain $D$ and $F$ be fixed. Algorithm~\ref{alg:query} yields a simple and lightweight heuristic for answering distance queries. We now describe a few implementation details.

\subparagraph{Choice of RSFC.}
In~\cite{ASANO19973}, the authors initially consider the Hilbert curve. They then provide a theoretical construction of a recursive space-filling curve (RSFC) in two dimensions that guarantees the number of unique elements in $\texttt{cells}(Q)$ is at most three, rather than four. However, we note that although this improves the number of binary searches needed, and may reduce the total volume of the region considered, their function is complicated to implement.

The Hilbert curve is widely used in data structures because it is locality-preserving: all pairs of consecutive points along the curve have distance at most two. This property is exploited to speed up range reporting structures, including the $R$-trees in Boost~\cite{boostgeometry}. We therefore include a Hilbert-curve implementation.
Concretely, we implement the function $\pi$ from Definition~\ref{def:recursive_SFC} such that the ordering of $D$ by $\pi$ follows a Hilbert curve. The computation of the ranks is accelerated by the use of a lookup table. 

However, in our specific application, the locality-preserving property of the Hilbert curve is not exploited. Distance queries are handled purely by binary search and direct scanning. Therefore, we also offer a second implementation using the Z-curve (also known as the Morton curve).
The Z-curve has an additional practical property: for any point $q \in N$, the bit-string $\pi(q)$ can be computed efficiently by interleaving coordinate bits. That is, the bit pairs in $\pi(q)$ can be constructed independently (see Algorithm~\ref{alg:zcurve}).

\subparagraph{Storing $P$ twice.}
To accelerate queries, we store the point set $P$ in two separate data structures. The first stores the elements of $P$ in an array $A_P$ sorted by their $\pi$-values. The second stores the corresponding bit strings $\pi(p)$ for all $p \in P$ in a B-tree. A 1:1 correspondence to values in $A_P$ and values in the B-tree makes sure that we avoid computing $\pi(p)$ during queries. In particular, the binary search step ($A_P$.\texttt{lower\_bound} in Algorithm~\ref{alg:query}) becomes a direct comparison over the entries in the B-tree. Our implementation uses a custom static B-tree with AVX-512 instructions following an implementation presented in \cite{AlgorithmicaSTree} with a slight overhead to store the tree, where the lowest layer is contiguous and sorted in memory.

\subparagraph{Dynamic implementation.}
\label{sec:dynamic}
We simply store the pairs $(\pi(p), p)$ in a B-tree multimap.

\begin{algorithm}[t]
    \caption{Z-curve \texttt{$\pi$}(point $p \in [2^\omega]^2$)}
    \label{alg:zcurve}
    \begin{algorithmic}
    \STATE output $\gets$ string$[2 \omega]$  
    \FOR{$i \in [ \omega]$}
        \STATE output$[2i] \gets $ the $i$'th bit of $p.x$
        \STATE output$[2i + 1] \gets$ the $i$'th bit of $p.y$
    \ENDFOR
    \RETURN output
    \end{algorithmic}
\end{algorithm}

\subsection{Quadtrees and early termination}
\label{sec:discussion_quadtree}

Algorithm~\ref{alg:zcurve} implements what is commonly known as the Z-curve, or Morton encoding. These are widely used in quadtrees, as they provide an efficient means to determine, for any point~$q$ and quadtree level~$\ell$, a bit string that uniquely identifies the hypercube at level~$\ell$ containing~$q$. This property enables the use of quadtrees without pointers~\cite{Chang1995Parallel, Sundar2008Bottom, tang2001linear,zhang2013inverted, zhang2019efficient}.

A quadtree supports distance reporting queries as follows: Given a query $(q, \delta)$, it computes the smallest axis-aligned hypercube~$Q$ that contains the ball~$B_\delta(q)$. It then identifies all nodes at level~$\lceil \log \|B_\delta(q)
\| \rceil$ that intersect~$Q$. For each such node~$C$, it traverses its descendant leaves and reports the points in those leaves that lie within~$B_\delta(q)$. By Definition~\ref{def:recursive_SFC}, Lemma~\ref{lem:equal} and Algorithm~\ref{alg:query}, we thereby show that process is equivalent to the heuristic proposed by~\cite{ASANO19973}. 

There is a crucial distinction: our quadtree is purely conceptual. We neither construct nor traverse a tree. Instead, we operate solely on a sorted array. While quadtree distance queries follow the same high-level logic, they do so in an involved and less efficient manner:
\begin{itemize}[noitemsep]
\item Regular quadtrees incur significant overhead because leaves are not stored contiguously.
\item Pointer-free quadtrees use hash maps or dictionaries to map Morton codes to quadtree nodes~\cite{Sundar2008Bottom, zhang2013inverted, zhang2019efficient}, which introduces a significant lookup overhead.
\item Karras~\cite{Karras2012Octrees} lays out quadtree cells in an array such that Morton codes directly address the corresponding quadtree cells. However, this approach still stores the entire tree. In contrast, we store only the leaves: i.e., the point set~$P$ sorted along the Z-curve.
\end{itemize}
\noindent
In essence, our queries execute the same logic as quadtree queries. As such, we argue that a performance comparison against quadtrees primarily reflect implementation overheads, not fundamental algorithmic differences. 
In addition, quadtrees are typically implemented to support nearest-neighbour queries. We are not aware  of any implementation that claims to outperform $R$-trees. Typically, $R$-trees have better performance than quadtrees~\cite{Kothuri2002Quadtree}, which is also illustrated by \texttt{Boost} using $R$-trees.
We thus do not include quadtrees in our experiments.

\subparagraph{Early termination.}
To improve performance on empty query ranges, one can construct a separate early-termination data structure based on a quadtree. Such a data structure performs a preliminary check on the query range to determine whether it is empty or not. We implement an early-termination data structure by taking a bounding box $G$ of the data whose side length is a power of two, and then constructing a quadtree by splitting $G$ into four equal-sized squares up to some constant depth $d$.
This generates a tree of depth $d$. 
For all nodes in the tree, we store a boolean indicating whether the cell contains a point or not. 
Given a query, we compute the common prefix and find the cell's boolean in constant time. If its boolean is \texttt{false}, then we simply return that the query range is empty; otherwise, we run our regular range searching algorithm.
In our implementation, we set the constant $d = 12$.

\section{\texorpdfstring{Experiments \\}{Experiments}}\label{sec:experiments}
We conduct a range of experiments, where we compare our algorithm against a wide variety of algorithms ranging from standard solutions from \texttt{Boost} and \texttt{CGAL}, to  several recent parallel libraries in sequential mode. 

At a high level, we perform four types of experiments.
First, we measure construction and query performance on synthetic data. Synthetic data enables controlled experiments at larger input sizes and facilitates asymptotic analysis. Second, we measure performance on real-world data. Following established precedent in range reporting~\cite{Qi2020Packing}, we evaluate all solutions on the real-world \textsc{Tiger} and \textsc{TigerEast} data sets.
Additionally, we consider an application domain: subtrajectory clustering~\cite{Hoog2025Efficient}, where distance queries dominate the running time. This allows us to assess performance in a realistic setting,  and it allows us to use additional real-world data: the trajectory data used in a clustering application.
Finally, we evaluate dynamic behaviour on synthetic data by, for varying input sizes and number of queries, first inserting points, then deleting half of them, and finally performing the queries.

\subparagraph{Methodology.} We carry out our implementation in \cpp and compile our programs with \texttt{gcc-11.4} with full optimization flags enabled. The machines used are two identical equipped with  a Xeon w5-5-3435X running at 3.10 Ghz having a L3 cache of 45 MB and 128 GB RAM.
Results are not compared across machines and 
each experiment is repeated 3 times. Up to 4 experiments are scheduled in parallel if the main memory consumption permits it. The order of the experiments is randomized for fairness.
\subparagraph{Algorithms.}
\begin{table}[t]
\centering
\begin{tabular}{@{}lll@{}}
\toprule
\textbf{\texttt{Name}} & \textbf{Brief description} & \textbf{Citation} \\ 
\midrule

\texttt{Boost-L} & Boost’s $R$-tree using linear packing. & \cite{boostgeometry} \\

\texttt{Boost-Q} & Boost’s $R$-tree using quadratic packing. & \cite{boostgeometry} \\

\texttt{Boost-R*} & Boost’s $R$-tree using $R^*$ packing. & \cite{boostgeometry} \\

\texttt{CGAL-Kd} & CGAL’s $k$-d tree for range queries. & \cite{cgal} \\

\texttt{CGAL-RTree} & CGAL’s implementation of an $R$-tree. & \cite{cgal} \\

\texttt{Curve-Z+E} & Our algorith using a $Z$-curve and early termination. & This paper \\

\texttt{Curve-H+E} & Our algorith using a Hilbert-curve and early termination. & This paper \\

\texttt{Curve-Z} & Our algorith using a $Z$-curve without early termination. & This paper \\

\texttt{Curve-H} & Our algorith using a Hilbert-curve without early termination. & This paper \\

\texttt{PAM} & Augmented range tree. & \cite{SunBlelloch2019ALENEX} \\

\texttt{ParGeo} & Parallel $k$-d tree. & \cite{Wang2022ParGeo} \\

\texttt{PkD-tree} & Parallel $k$-d tree. & \cite{Men2025SIGMOD} \\

\bottomrule
\end{tabular}
\caption{Range-searching implementations included in our comparison.}
\label{tab:implementations2}
\end{table}
We implement our algorithm using two space-filling curves, \competitorZCurve uses a Z-curve, and \competitorHCDSHilbert uses a Hilbert curve. Additionally, we implement an early-termination data structure, and include it for the respective SFCs in \competitorEHCDS and \competitorEHCDSHilbert.
For the dynamic version, we use the B-tree multimap implementation of the Abseil team~\cite{AbseilCPP} and compare to the one provided by the standard library.
All implementations available in our repository are listed in Table~\ref{tab:implementations2}.
\competitorPAM, \texttt{ParGeo} and \competitorPKD present parallel libraries.
However, two of these also claim strong sequential performance. For fairness of comparison, we evaluate all algorithms based on their sequential performance.
We also include \texttt{CGAL}s~\cite{cgal} implementation of $k$-d trees and $R$-trees in our benchmark suite. We note that the implementation table  excludes the data structures from~\cite{Arge2004Priority, Qi2020Packing}. This is because the data structures from~\cite{Arge2004Priority, Qi2020Packing} optimise for I/O's rather than runtime, and they are not competitive on runtime.

Preliminary experiments showed that \texttt{ParGeo}~\cite{Wang2022ParGeo} and \texttt{CGAL}~\cite{cgal} are too slow during querying, requiring more than 1 second to answer 1000 queries. The construction times of \texttt{ParGeo} were competitive. The results of this preliminary experiment can be found in Appendix~\ref{app:prelim}.
Therefore, we exclude these two implementations from extensive testing.

\subparagraph{Data.} Our data set comprises real-world data and synthetically generated inputs from several distributions. 
We use~32-bit unsigned integers as coordinates in all experiments in this paper.
Our code also works on 16-bit integers and is easily extensible to other widths.

The real-world instances are listed in Table~\ref{tab:realworld:instances}. The points in these sets are uniformly scaled  to span $\Delta=2^{31}$ in the larger dimension. 
We follow precedent~\cite{Qi2020Packing} and use the \textsc{tiger2006se} data set~\cite{tiger2006se}, which consists of geographical features from the US Census Bureau's system. 
We also use their smaller \textsc{tiger2006east} data set.
We additionally evaluate performance on several real-world data sets commonly used in trajectory analysis. The map-construction data sets---\textsc{Athens\_Large}, \textsc{Athens\_Small}, \textsc{Berlin\_Large}, and \textsc{Chicago}---consist of GPS traces from urban car traffic and serve as standard benchmarks for road-map reconstruction~\cite{ahmed2015comparison, buchinGroup2017, wang2015efficient, huang2018automatic}.
The fourth data set is \textsc{Drifter} from the NOAA Global Drifter Program~\cite{gdacDataset}. Following~\cite{Hoog2025Efficient}, we use all drifters recorded from 2022--2024, yielding roughly $5{,}500$ trajectories with an average length of $500$ points (in total $n \ge 2 \cdot 10^6$).  
Finally, \textsc{UnID}~\cite{9827305} consists of trajectory data collected on German highways.

To generate the synthetic data sets, we employ three distributions. For a window size of $\Delta=2^{32}$, the distributions are defined as follows:
\begin{itemize}
    \item {\bf Uniform:} A uniform distribution across the whole plane.
    \item {\bf Normal:} A normal distribution centered on the plane $\frac{\Delta}{2}$ with a standard deviation of $\frac{\Delta}{8}$ similar to parameters in~\cite{Kriegel1989Performance}.
    \item {\bf Skewed:} Uniform sampling along the $x$-axis and geometric distribution with $p=\frac{5}{\Delta}$ along the $y$-axis. These parameters are from \cite{Qi2020Packing}.
\end{itemize}

Dynamic data is generated as follows.
First, we sample $n$ points from a distribution. Second, we delete half of these points by random selection and in random order. Afterwards we query $k\cdot n$ times, for $k \in\{1,1.5,10\}$ following a (different) distribution.
We use 5 instances with different seeds for static and dynamic data sets per~configuration.

\subparagraph{Queries.}
For the queries we again use the uniform, normal, and skewed distributions defined before. We use a square query window with a relative side length $\delta$ between $10^{-7}$ and~0.02.
That is, each query has a size length $\delta$ times the bounding box bounding the domain. 

\begin{table}
    \centering
    \begin{tabular}{lr}
    \toprule
Name&\# Points\\
\midrule
      \textsc{athens\_large}&   72439 \\
      \textsc{athens\_small}&2840 \\
 \textsc{berlin\_large}&   192223 \\
 \textsc{chicago} &     118360 \\
 \textsc{drifter}&    1792084 \\
 \textsc{unid\_00}&     214076 \\
  \textsc{tiger2006east} & 17468292 \\
 \textsc{tiger2006se} & 35904949 \\\bottomrule
    \end{tabular}
    \caption{Statistics for the realworld data sets.}
    \label{tab:realworld:instances}
\end{table}
\subsection{Static synthetic experiments}
We investigate the scaling properties of our algorithms and competitors.
Firstly, we vary the number of points stored in the data structures.
Secondly, we study the influence of the relative query window size~$\delta$.
The choice of distributions does not greatly influence the results on synthetic data.
Accounting for that and the limited space, we limit our discussion here to point generation by normal distribution and uniform query sampling. We refer to Appendix~\ref{app:other:dist} for the plots and tables containing the results of all distributions under uniform~sampling.
\subsubsection{Varying input size}
In this experiment, we fix the query window size to $\delta=0.001$ and the number of queries to~$10^6$. We report on construction time, query time and memory consumption.

\subparagraph{Construction time.} In Figure~\ref{fig:exp:norm:unif:scaling:constructionperpoint} the construction time per point is shown. 
Naturally, our \competitorHCDS approach is the fastest, since we only need to compute the bit strings $\pi(A[i])$ and sort them once. For $10^9$ points this is 1.76 times faster than \competitorBOOSTl.
Our approaches that additionally construct an early-termination data structure (\competitorEHCDS and \competitorEHCDSHilbert) require more time per item, when there are fewer points to construct from. For every point we have to write into $2*d = 24$ cells (one per level) of the quadtree.
The spike around $10^6$ can be explained by the following:
When we have more than $2^{24}/24$ points, we write the points into the finest level and compute the coarser levels by combining the neighbouring cells. 
For $10^6$ this leads to slightly higher costs per point.
For larger input sizes, the cost of building this additional data structure is amortized and for inputs of size $10^9$ this extra data structure costs 5.3\% more time.
\competitorPAM segfaults for $10^8$ and $10^9$ and takes 12 times more time for $10^7$.
We suspect that there is a implementation error in it.
\competitorPKD takes twice the time of our approaches.
  Table~\ref{tab:app:const:scaling} in the Appendix reports the absolute timings for this experiment.
  
\subparagraph{Query Time.}
The query time results are shown in Figure~\ref{fig:exp:norm:unif:scaling:query} and detailed results can be found in Table~\ref{tab:app:query:scaling}.
For low density point sets ($\leq10^6$ points) our \competitorEHCDS approach is the fastest, being nearly twice as fast as the version without the auxiliary early-termination data structure for $10^4$ points. This indicates that the early termination is indeed helpful when many of the query regions are empty.
For this lowest density our methods are up to 3 times faster than the \texttt{Boost} library variants.
The more points are in the set, the smaller the advantage of our technique becomes.
For $10^9$ points \competitorBOOSTq is the fastest method, being 7.5\% faster than~\competitorEHCDSHilbert. In this regime, the query time is dominated by the reporting of the points in the range. The additional if-statement in our approach to check if the point that is found indeed lies in the query region results in the additional overhead in this regime.

\subparagraph{Memory consumption.}
We measured the peak total amount of resident memory via \texttt{/proc/} and report the results in Figure~\ref{fig:exp:norm:unif:scaling:memper} (and Table~\ref{tab:app:mem:scaling} in the appendix). For $10^4$ and $10^5$ the short running times do not allow one to reliably capture these values. As \competitorPAM uses more than 12 times more memory than our code, it is excluded in the figure.
We suspect there is an implementation mistake in this library.
The overhead of this measurement is that we have to  store the points and queries separately once. For the larger point sets, our method requires substantially lower memory, about 23.2\% less.
For $10^9$ points there is nearly no difference between the early-termination methods and the implementations without it.
For small data sets the overhead of the early-termination data structure is relatively large, but still small in absolute terms: It requires a fixed memory of~$2^{25}$ bits, which is roughly 4.2 megabyte.

\begin{figure}
    \centering
    \includegraphics{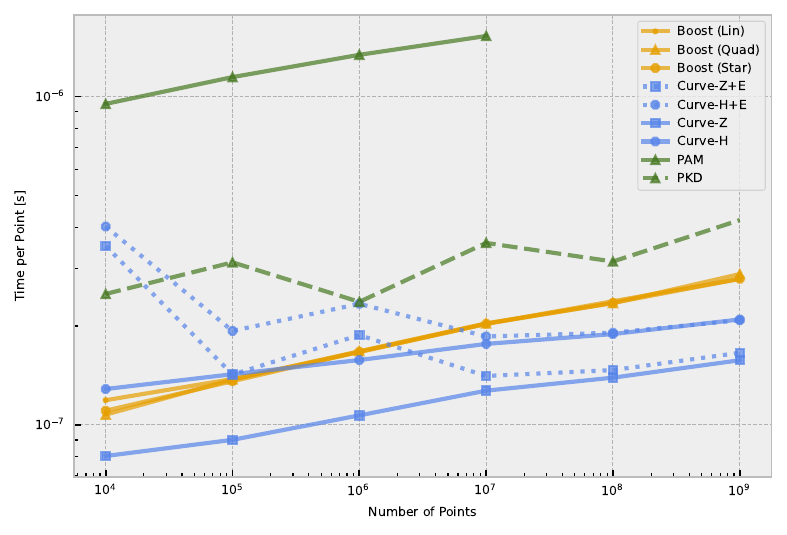}
    \caption{Time needed for constructing the data structure in seconds per point from $10^4$ to $10^9$. }
    \label{fig:exp:norm:unif:scaling:constructionperpoint}
\end{figure}
\begin{figure}
    \centering
    \includegraphics{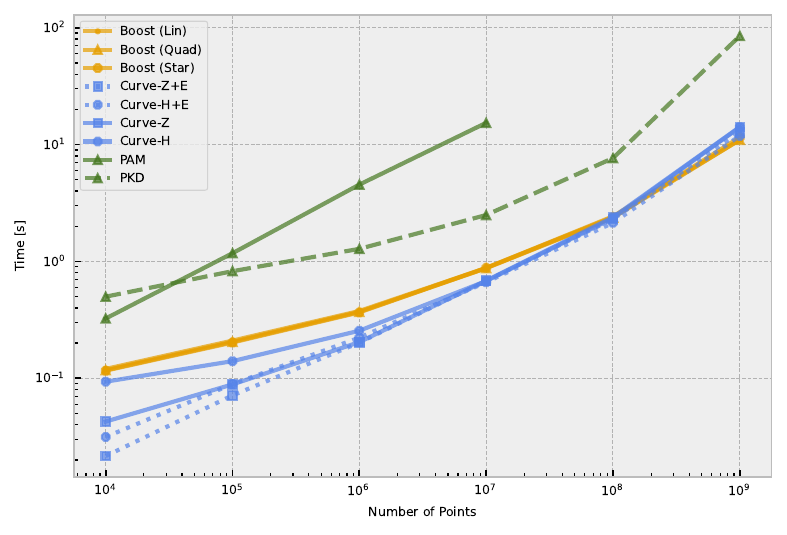}
    \caption{Overall query time to answer  $10^6$ uniform queries with size $\delta=0.001$ for $10^{4}$ to $10^9$.}
    \label{fig:exp:norm:unif:scaling:query}
\end{figure}
\begin{figure}
    \centering
    \includegraphics{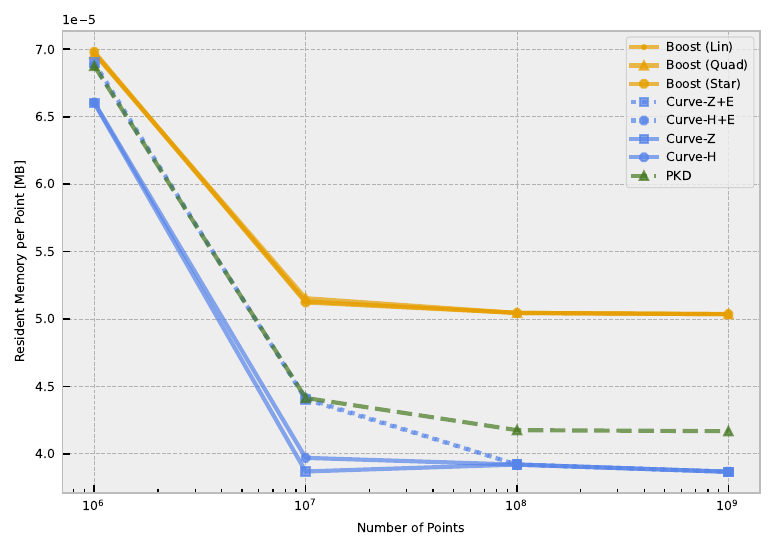}
    \caption{Maximum resident memory per point for $10^6$ to $10^9$ points reported via \texttt{/proc/}. This measure includes the queries and points once. PAM requires more memory, see Table~\ref{tab:app:mem:scaling}.}
    \label{fig:exp:norm:unif:scaling:memper}
\end{figure}
\subsubsection{Query window}
In this experiment, we explore how the size of the query window influences the query time. We fix the number of points  to $10^{7}$ and perform $10^6$ queries.

\subparagraph{Query time.} Figure~\ref{fig:exp:scalingw:query} shows the total query time plotted for the relative query window length $\delta$ from $10^{-7}$ to $0.02$.
For $10^{-7}$ every query region was empty.
Table~\ref{tab:app:query:scalingw} in the appendix contains all results. 
For large values of $\delta\geq 0.01$ the \texttt{Boost} variants are faster than our approach, about~35.8\% for~$\delta=0.02$. Again, this can again be explained by the fact that our algorithm has an additional if-statement for every point that is found to check whether it is indeed in the query region, and in this regime the query region (and the area surrounding it) contains many points. \competitorEHCDS dominates for small values of $\delta$ being $5.6$ times faster than the fastest of the \texttt{Boost} variant (\competitorBOOSTR) in the extreme case.
The early-termination reduces the running time by~37\% compared to~\competitorHCDS in this setting.

\begin{figure}
    \centering
    \includegraphics{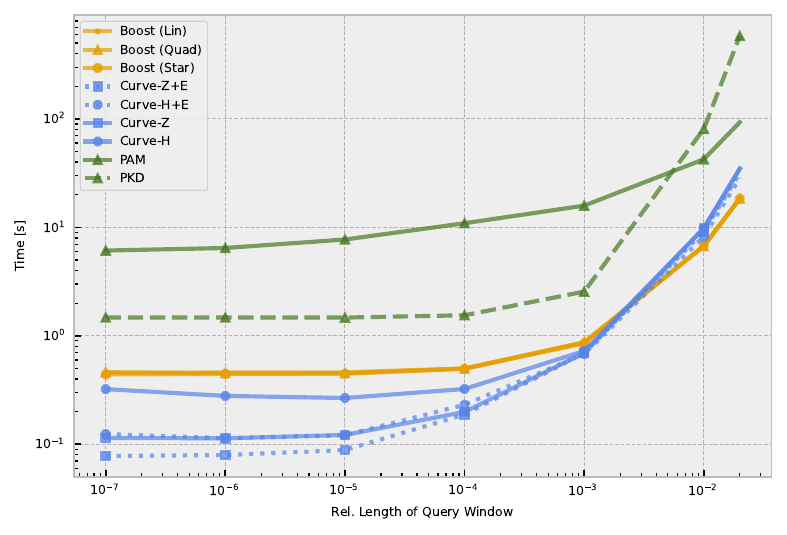}
    \caption{Plot of query time dependent on the relative length of the query window  in seconds.}
    \label{fig:exp:scalingw:query}
\end{figure}
\subsection{Real-world experiments}
In this section we discuss the performance of the algorithms on real world data. We report the construction time and query time for $10^{6}$ queries sampled according to the three distributions.

\begin{figure}
    \centering
    \includegraphics{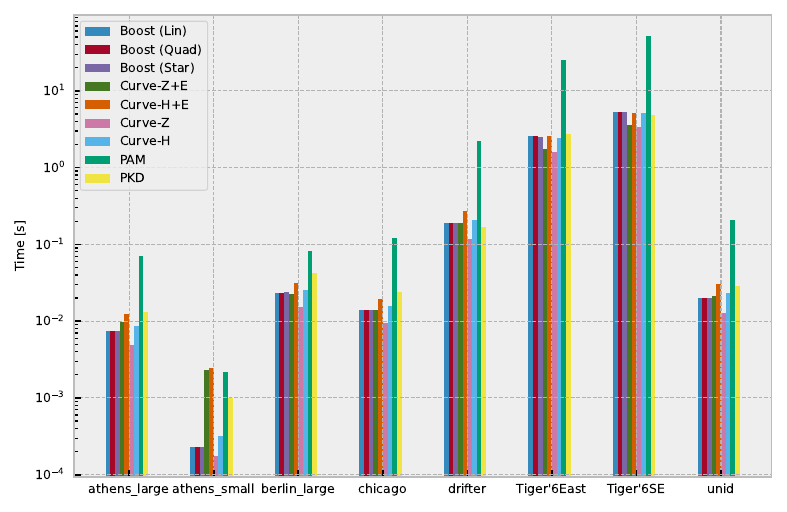}
    \caption{Average construction time for the real-world datasets. }
    \label{fig:exp:realworld:construction}
\end{figure}
\subparagraph{Construction time.}
The construction times are shown in Figure~\ref{fig:exp:realworld:construction}.
Similar to the results for synthetic data,  \competitorHCDS has the smallest construction time. 
It is 35.6\% faster than \competitorBOOSTR across all data sets on average.
On the ultra small \textsc{athens\_small} (2840 points) data set, we again see the  overhead introduced by the early-termination data structure, similar to the small synthetic data sets.
Here, \competitorEHCDS is more than 10 times slower than the version without early termination. For the large data sets \textsc{Tiger2006East} and \textsc{Tiger2006SE}, all of our algorithms outperform the three \texttt{Boost} algorithms.

\subparagraph{Query time.}
Figure~\ref{fig:exp:realworld:query} reports the query times for uniformly sampled queries. We report the results for all three distributions (uniform, normal, and skewed) in Table~\ref{tab:app:realworld:query} in the appendix.
On all data sets except for the \textsc{tiger2006se} data set either \competitorEHCDS or \competitorEHCDSHilbert is the fastest algorithm. For the ultra small sets, our implementation that include early-termination are much faster than \texttt{Boost}. For example, for \textsc{athens\_small} \competitorEHCDS is 4.6 times faster than \competitorBOOSTR and 5.25 times faster then its vanilla implementation (\competitorHCDS).
For the largest dataset (\textsc{tiger2006se} with 35 million points), the \competitorBOOSTq algorithm is the fastest for uniformly and skewed sampled queries.
\begin{figure}

    \centering
    \includegraphics{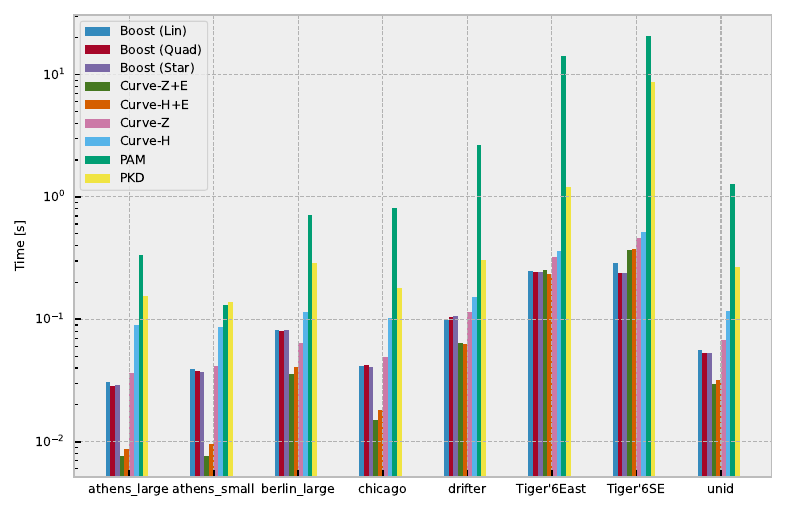}
    \caption{Average query time for $10^{6}$ queries on the real-world data sets.}
    \label{fig:exp:realworld:query}
\end{figure}

\subsection{Dynamic synthetic experiments}
The running times for dynamically inserting $n$ points, using a normal distribution, then deleting half by random sampling, and answering $k\cdot n$ uniform queries with $k\in \{1,1.5,10\}$ are listed in Table~\ref{tab:exp:dynamic:runtime}. For $k=1$ we report the results in Figure~\ref{fig:exp:dynamic:runtime}.
Our algorithm is on average three times faster than the \competitorBOOSTR in the largest query 
heavy scenario with $k=10$. For $k=1$ this ratio grows to 4.3. 
The choice of the map container is also important: \competitorHCDSStd using \texttt{std::multimap} is 5.53 times slower than \competitorHCDS for $k=10$ and $n=10^7$, which uses \texttt{absl::btree\_multimap}. On the other hand, the difference between the two space-filling curves is small, with the Z-curve always outperforming the Hilbert curve.

\begin{table}[]
    \centering
    \begin{tabular}{llrrrrrr}
\toprule
Insertions&Queries&\competitorBOOSTl&\competitorBOOSTq&\competitorBOOSTR&\competitorHCDS&\competitorHCDSHilbert&\competitorHCDSStd\\ \midrule
$10^{4}$&$10^{4}$&0.009&0.008&0.021&\bfseries 0.003&0.005&0.003\\
$10^{4}$&\numprint{15000}&0.010&0.010&0.022&\bfseries 0.003&0.005&0.004\\
$10^{4}$&$10^{5}$&0.034&0.027&0.034&\bfseries 0.012&0.017&0.013\\
$10^{5}$&$10^{5}$&0.344&0.159&0.245&\bfseries 0.038&0.054&0.070\\
$10^{5}$&\numprint{150000}&0.424&0.191&0.261&\bfseries 0.045&0.064&0.085\\
$10^{5}$&$10^{6}$&1.769&0.726&0.537&\bfseries 0.182&0.235&0.303\\
$10^{6}$&$10^{6}$&17.053&5.665&3.951&\bfseries 0.806&0.977&2.588\\
$10^{6}$&\numprint{1500000}&21.473&7.181&4.489&\bfseries 1.020&1.183&3.260\\
$10^{6}$&$10^{7}$&95.065&32.577&13.575&\bfseries 4.293&4.745&14.371\\
$10^{7}$&$10^{7}$&1375.712&253.148&99.312&\bfseries 23.067&24.434&111.824\\
$10^{7}$&\numprint{15000000}&1739.294&332.676&121.817&\bfseries 30.852&32.249&151.198\\
$10^{7}$&$10^{8}$&\em OOT&1649.575&490.485&\bfseries 162.034&163.242&896.422\\

\bottomrule
    \end{tabular}
    \caption{Average running time in seconds for the dynamic test suite (deleting 50\% of the insertions). \emph{OOT} instances did not finish with in the time limit of 1 hour.
    }
    \label{tab:exp:dynamic:runtime}
\end{table}
\begin{figure}

    \centering
    \includegraphics{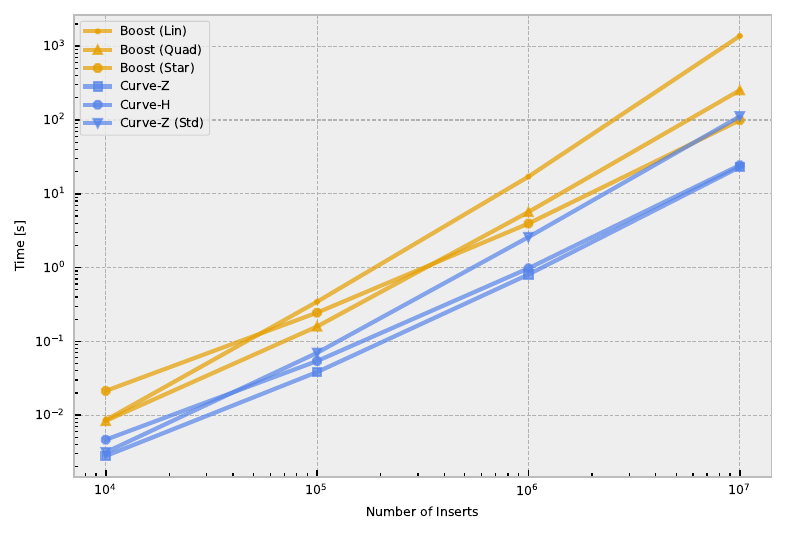}
    \caption{Running time for insertions and queries ($k=1$) of the dynamic algorithms. }
    \label{fig:exp:dynamic:runtime}
\end{figure}

\section{\texorpdfstring{Conclusion \\}{Conclusion.}}
We proposed a simple and practical data structure for distance-reporting queries, based on sorting the input point set along a space-filling curve. Although the underlying idea is not new~\cite{ASANO19973}, many recent papers favour range-reporting structures over this approach~\cite{Ali2018Maximum,buchin2025roadster,Gudmundsson2021Practical,Gudmundsson2023Practical,Kim2024SGIR,Hoog2025Efficient,Yershova2007Improving}. State-of-the-art implementations already use space-filling curves to improve performance, but still wrap the points in a traditional structure such as an $R$-tree, quadtree, or $k$-d tree. The novelty of our approach is that we propose to discard this surrounding structure entirely: storing the points $P$ directly in sorted order.

We provide a basic \cpp implementation and conduct extensive experiments comparing eight modern approaches for distance queries. Notably, and contrary to the claims in~\cite{SunBlelloch2019ALENEX}, we observe that \competitorPAM is significantly slower than its competitors in sequential settings. Among the remaining methods, only the highly optimised \texttt{Boost} library is competitive.

Statically, our simplest implementation performs competitively with \texttt{Boost}, though its relative performance deteriorates in extreme cases—when many queries are empty or when queries report a large fraction of the points. With a lightweight augmentation providing early termination on empty queries, our method becomes the preferred static solution for all query diameters up to $0.01$ times the bounding-box size.
In dynamic settings, our approach consistently outperforms all competitors in total update and query time. Although \texttt{Boost} remains the strongest alternative, our solution is between a factor three and five faster.

Overall, we demonstrate that a simple, practical technique for distance queries can match or exceed the performance of sophisticated modern structures. Our experiments show that much of the practical efficiency of existing implementations can be achieved by simply sorting along a space-filling curve.

\newpage
\bibliography{references}

@article{ASANO19973,
title = {Space-filling curves and their use in the design of geometric data structures},
journal = {Theoretical Computer Science},
volume = {181},
number = {1},
pages = {3-15},
year = {1997},
issn = {0304-3975},
doi = {https://doi.org/10.1016/S0304-3975(96)00259-9},
author = {Tetsuo Asano and Desh Ranjan and Thomas Roos and Emo Welzl and Peter Widmayer}
}

@article{bentley1975multidimensional,
author = {Bentley, Jon Louis},
title = {Multidimensional binary search trees used for associative searching},
year = {1975},
issue_date = {Sept. 1975},
publisher = {Association for Computing Machinery},
address = {New York, NY, USA},
volume = {18},
number = {9},
issn = {0001-0782},
doi = {10.1145/361002.361007},
journal = {Communications of the ACM},
pages = {509–517},
numpages = {9}
}

@inproceedings{SunBlelloch2019ALENEX,
  author    = { Sun, Yihan and Blelloch, Guy},
  title     = {Parallel Range, Segment and Rectangle Queries with Augmented Maps},
  booktitle = {Proceedings of the Workshop on Algorithm Engineering and Experiments (ALENEX)},
  pages     = {159--173},
  year      = {2019},
  publisher = {SIAM},
  doi       = {10.1137/1.9781611975499.13}
}

@article{wang2015efficient,
  title={Efficient map reconstruction and augmentation via topological methods},
  author={Wang, Suyi and Wang, Yusu and Li, Yanjie},
  journal={ACM   International Conference on Advances in Geographic Information Systems (SIGSPATIAL)},
  pages={1--10},
  year={2015},
    doi = {10.1145/2820783.2820833}
}

@article{9827305,
  author={Moers, Tobias and Vater, Lennart and Krajewski, Robert and Bock, Julian and Zlocki, Adrian and Eckstein, Lutz},
  journal={2022 IEEE Intelligent Vehicles Symposium (IV)}, 
  title={The exiD Dataset: A Real-World Trajectory Dataset of Highly Interactive Highway Scenarios in Germany}, 
  year={2022},
  volume={},
  number={},
  pages={958-964},
  keywords={Training;Roads;Pipelines;Vehicular ad hoc networks;Trajectory;Safety;Recording},
  doi={10.1109/IV51971.2022.9827305}}

@online{gdacDataset,
  author       = "Lumpkin, Rick and Centurioni, Luca",
  title        = "Global Drifter Program quality-controlled 6-hour interpolated data from ocean surface drifting buoys",
  howpublished = "NOAA National Centers for Environmental Information",
  year         = "2019",
  urldate = {2023-07-31}, 
  doi = "10.25921/7ntx-z961"
}

@misc{tiger2006se,
  author       = {U.S. Census Bureau, Geography Division},
  title        = {{TIGER/Line Files, 2006 Second Edition}},
  year         = {2007},
  howpublished = {\url{https://www2.census.gov/geo/pdfs/maps-data/data/tiger/tiger2006se/tgr06se.pdf}},
}

@article{huang2018automatic,
  title={Automatic generation of road maps from low quality GPS trajectory data via structure learning},
  author={Huang, Jincai and Deng, Min and Tang, Jianbo and Hu, Shuling and Liu, Huimin and Wariyo, Sembeto and He, Jinqiang},
  journal={IEEE Access},
  volume={6},
  pages={71965--71975},
  year={2018},
  publisher={IEEE},
 doi = {10.1109/ACCESS.2018.2882581}
}

@inproceedings{buchinGroup2017,
author = {Buchin, Kevin and Buchin, Maike and Duran, David and Fasy, Brittany Terese and Jacobs, Roel and Sacristan, Vera and Silveira, Rodrigo I. and Staals, Frank and Wenk, Carola},
title = {Clustering Trajectories for Map Construction},
year = {2017},
isbn = {9781450354905},
doi = {10.1145/3139958.3139964},
booktitle = {ACM International Conference on Advances in Geographic Information Systems (SIGSPATIAL)},
articleno = {14},
numpages = {10},
keywords = {clustering, Trajectories, map construction, geometric algorithms},
location = {Redondo Beach, CA, USA},
}

@article{ahmed2015comparison,
  title={A comparison and evaluation of map construction algorithms using vehicle tracking data},
  author={Ahmed, Mahmuda and Karagiorgou, Sophia and Pfoser, Dieter and Wenk, Carola},
  journal={GeoInformatica},
  volume={19},
  pages={601--632},
  year={2015},
  publisher={Springer},
    doi = {10.1007/s10707-014-0222-6}
}

@article{Men2025SIGMOD,
author = {Men, Ziyang and Shen, Zheqi and Gu, Yan and Sun, Yihan},
title = {Parallel kd-tree with Batch Updates},
year = {2025},
publisher = {Association for Computing Machinery},
address = {New York, NY, USA},
volume = {3},
number = {1},
doi = {10.1145/3709712},
journal = {Proceedings of the ACM on Management of Data},
articleno = {62},
numpages = {26},
keywords = {kd-tree, parallel algorithms, parallel trees, spatial partition trees}
}

@article{bentley1979multidimensional,
title = {Decomposable searching problems I. Static-to-dynamic transformation},
journal = {Journal of Algorithms},
volume = {1},
number = {4},
pages = {301-358},
year = {1980},
issn = {0196-6774},
doi = {https://doi.org/10.1016/0196-6774(80)90015-2},
author = {Jon Louis Bentley and James B Saxe}}

@article{friedman1977algorithm,
author = {Friedman, Jerome H. and Bentley, Jon Louis and Finkel, Raphael Ari},
title = {An Algorithm for Finding Best Matches in Logarithmic Expected Time},
year = {1977},
issue_date = {Sept. 1977},
publisher = {Association for Computing Machinery},
address = {New York, NY, USA},
volume = {3},
number = {3},
issn = {0098-3500},
doi = {10.1145/355744.355745},
journal = {ACM Transactions on Mathematical Software},
pages = {209–226},
numpages = {18}
}

@inproceedings{guttman1984r,
author = {Guttman, Antonin},
title = {R-trees: a dynamic index structure for spatial searching},
year = {1984},
isbn = {0897911288},
publisher = {Association for Computing Machinery},
address = {New York, NY, USA},
doi = {10.1145/602259.602266},
booktitle = {ACM International Conference on Management of Data (SIGMOD)},
pages = {47–57},
numpages = {11},
location = {Boston, Massachusetts},
series = {SIGMOD '84}
}

@inproceedings{beckmann1990r,
author = {Beckmann, Norbert and Kriegel, Hans-Peter and Schneider, Ralf and Seeger, Bernhard},
title = {The R*-tree: an efficient and robust access method for points and rectangles},
year = {1990},
isbn = {0897913655},
publisher = {Association for Computing Machinery},
address = {New York, NY, USA},
doi = {10.1145/93597.98741},
booktitle = {ACM International Conference on Management of Data (SIGMOD)},
pages = {322–331},
numpages = {10},
series = {SIGMOD '90}
}

@inproceedings{Wang2022ParGeo,
  author    = {Yiqiu Wang and Rahul Yesantharao and Shangdi Yu and Laxman Dhulipala and Yan Gu and Julian Shun},
  title     = {{ParGeo}: A Library for Parallel Computational Geometry},
  booktitle = {Proceedings of the 30th Annual European Symposium on Algorithms (ESA)},
  series    = {Leibniz International Proceedings in Informatics (LIPIcs), Vol.244},
  pages     = {88:1--88:19},
  year      = {2022},
  publisher = {Schloss Dagstuhl – Leibniz-Zentrum für Informatik},
  doi       = {10.4230/LIPIcs.ESA.2022.88}
}

@article{finkel1974quad,
author = {Finkel, R. A. and Bentley, Jon Louis},
title = {Quad trees a data structure for retrieval on composite keys},
year = {1974},
issue_date = {March     1974},
publisher = {Springer-Verlag},
address = {Berlin, Heidelberg},
volume = {4},
number = {1},
issn = {0001-5903},
doi = {10.1007/BF00288933},
journal = {Acta Informatica},
month = mar,
pages = {1–9},
numpages = {9}
}

@article{Arge2004Priority,
author = {Arge, Lars and Berg, Mark De and Haverkort, Herman and Yi, Ke},
title = {The priority R-tree: A practically efficient and worst-case optimal R-tree},
year = {2008},
publisher = {Association for Computing Machinery},
address = {New York, NY, USA},
volume = {4},
number = {1},
issn = {1549-6325},
doi = {10.1145/1328911.1328920},
journal = {ACM Transactions on Algorithms},
month = mar,
articleno = {9},
numpages = {30},
keywords = {R-trees}
}

@article{hilbert1891,
  author    = {David Hilbert},
  title     = {{\"U}ber die stetige Abbildung einer Linie auf ein Fl{\"a}chenst{\"u}ck},
  journal   = {Mathematische Annalen},
  volume    = {38},
  number    = {3},
  pages     = {459--460},
  year      = {1891},
  doi       = {10.1007/BF01199431},
  language  = {German}
}

@Article{Kim2024SGIR,
AUTHOR = {Kim, Juyoung and Hong, Seoyoung and Jeong, Seungchan and Park, Seula and Yu, Kiyun},
TITLE = {SGIR-Tree: Integrating R-Tree Spatial Indexing as Subgraphs in Graph Database Management Systems},
JOURNAL = {ISPRS International Journal of Geo-Information},
VOLUME = {13},
YEAR = {2024},
NUMBER = {10},
ARTICLE-NUMBER = {346},
ISSN = {2220-9964},
DOI = {10.3390/ijgi13100346}
}

@article{buchin2025roadster,
  title={Roadster: Improved algorithms for subtrajectory clustering and map construction},
  author={Buchin, Kevin and Buchin, Maike and Gudmundsson, Joachim and Hendriks, Jorren and Sereshgi, Erfan Hosseini and Silveira, Rodrigo I and Sleijster, Jorrick and Staals, Frank and Wenk, Carola},
  journal={Computers and Geosciences},
  volume={196},
  pages={105845},
  year={2025},
  publisher={Elsevier}
,
  doi={10.1016/j.cageo.2024.105845}
}

@article{Ali2018Maximum,
author = {Ali, Mohammed Eunus and Eusuf, Shadman Saqib and Abdullah, Kaysar and Choudhury, Farhana M. and Culpepper, J. Shane and Sellis, Timos},
title = {The maximum trajectory coverage query in spatial databases},
year = {2018},
issue_date = {November 2018},
publisher = {VLDB Endowment},
volume = {12},
number = {3},
issn = {2150-8097},
doi = {10.14778/3291264.3291266},
journal = {Proceedings of the VLDB Endowment},
month = nov,
pages = {197–209},
numpages = {13}
}

@article{Gudmundsson2021Practical,
author = {Gudmundsson, Joachim and Horton, Michael and Pfeifer, John and Seybold, Martin P.},
title = {A Practical Index Structure Supporting Fr\'{e}chet Proximity Queries among Trajectories},
year = {2021},
issue_date = {September 2021},
publisher = {Association for Computing Machinery},
address = {New York, NY, USA},
volume = {7},
number = {3},
issn = {2374-0353},
doi = {10.1145/3460121},
journal = {ACM Trans. Spatial Algorithms Syst.},
month = jun,
articleno = {15},
numpages = {33},
keywords = {range search, nearest neighbor, cover tree, cluster tree, clustering, dynamic metric index, Fr\'{e}chet Distance}
}

@article{Gudmundsson2023Practical,
author = {Gudmundsson, Joachim and Pfeifer, John and Seybold, Martin P.},
title = {On Practical Nearest Sub-Trajectory Queries under the Fr\'{e}chet Distance},
year = {2023},
issue_date = {June 2023},
publisher = {Association for Computing Machinery},
address = {New York, NY, USA},
volume = {9},
number = {2},
issn = {2374-0353},
doi = {10.1145/3587426},
month = may,
articleno = {14},
numpages = {24},
keywords = {Fr\'{e}chet distance, nearest sub-trajectory, greedy decision algorithm, Hierarchical Simplification Tree, metric pruning}
}

@article{Yershova2007Improving,
author = {Yershova, Anna and LaValle, Steven},
year = {2007},
month = {03},
pages = {151 - 157},
title = {Improving Motion-Planning Algorithms by Efficient Nearest-Neighbor Searching},
volume = {23},
journal = {Robotics, IEEE Transactions on},
doi = {10.1109/TRO.2006.886840}
}

@InProceedings{Hoog2025Efficient,
  author =	{van der Hoog, Ivor and Ost, Lara and Rotenberg, Eva and Rutschmann, Daniel},
  title =	{{Efficient Greedy Discrete Subtrajectory Clustering}},
  booktitle =	{International Symposium on Computational Geometry (SoCG)},
  pages =	{78:1--78:20},
  series =	{Leibniz International Proceedings in Informatics (LIPIcs)},
  ISBN =	{978-3-95977-370-6},
  ISSN =	{1868-8969},
  year =	{2025},
  volume =	{332},
  editor =	{Aichholzer, Oswin and Wang, Haitao},
  publisher =	{Schloss Dagstuhl -- Leibniz-Zentrum f{\"u}r Informatik},
  address =	{Dagstuhl, Germany},
  URN =		{urn:nbn:de:0030-drops-232308},
  doi =		{10.4230/LIPIcs.SoCG.2025.78},
  annote =	{Keywords: Algorithms engineering, Fr\'{e}chet distance, subtrajectory clustering}
}

@inproceedings{Kriegel1989Performance,
  author       = {Hans{-}Peter Kriegel and
                  Michael Schiwietz},
  title        = {Performance Comparison of Point and Spatial Access Methods},
  booktitle    = {Design and Implementation of Large Spatial Databases, First Symposium
                  SSD'89, Santa Barbara, California, USA, July 17/18, 1989, Proceedings},
  series       = {Lecture Notes in Computer Science},
  volume       = {409},
  pages        = {89--114},
  publisher    = {Springer},
  year         = {1989},
  url          = {https://doi.org/10.1007/3-540-52208-5\_23},
  doi          = {10.1007/3-540-52208-5\_23},
  timestamp    = {Fri, 02 Sep 2022 08:20:38 +0200}
}

@article{Qi2020Packing,
  author       = {Jianzhong Qi and
                  Yufei Tao and
                  Yanchuan Chang and
                  Rui Zhang},
  title        = {Packing R-trees with Space-filling Curves: Theoretical Optimality,
                  Empirical Efficiency, and Bulk-loading Parallelizability},
  journal      = {{ACM} Trans. Database Syst.},
  volume       = {45},
  number       = {3},
  pages        = {14:1--14:47},
  year         = {2020},
  url          = {https://doi.org/10.1145/3397506},
  doi          = {10.1145/3397506},
  timestamp    = {Sat, 30 Sep 2023 10:29:25 +0200}
}

@inproceedings{Kothuri2002Quadtree,
author = {Kothuri, Ravi Kanth V and Ravada, Siva and Abugov, Daniel},
title = {Quadtree and R-tree indexes in oracle spatial: a comparison using GIS data},
year = {2002},
isbn = {1581134975},
publisher = {Association for Computing Machinery},
address = {New York, NY, USA},
doi = {10.1145/564691.564755},
booktitle = {ACM International Conference on Management of Data (SIGMOD)},
pages = {546–557},
numpages = {12},
location = {Madison, Wisconsin},
series = {SIGMOD '02}
}

@manual{boostgeometry,
  title        = {Boost.Geometry (a.k.a. Generic Geometry Library)},
  author       = {{Boost C++ Libraries}},
  year         = {2024},
  note         = {\url{https://www.boost.org/doc/libs/release/libs/geometry/}},
  organization = {Boost.org}
}

@inproceedings{tropf1981multidimensional,
  author    = {Hans S. Tropf and Hans Herzog},
  title     = {Multidimensional Range Search in Dynamically Balanced Trees},
  booktitle = {Proceedings of the International Conference on Database Theory (ICDT)},
  year      = {1981},
  pages     = {153--163},
  publisher = {North-Holland Publishing Co.},
  address   = {Amsterdam, The Netherlands}
}

@article{cgal,
  author    = {Andreas Fabri and Sylvain Pion},
  title     = {{CGAL}: The Computational Geometry Algorithms Library},
  journal   = {Proceedings of the 17th ACM SIGSPATIAL International Conference on Advances in Geographic Information Systems},
  year      = {2009},
  pages     = {538--539},
  publisher = {ACM},
  doi       = {10.1145/1653771.1653865}
}

@inproceedings{Karras2012Octrees,
author = {Karras, Tero},
title = {Maximizing parallelism in the construction of BVHs, octrees, and k-d trees},
year = {2012},
isbn = {9783905674415},
publisher = {Eurographics Association},
address = {Goslar, DEU},
booktitle = {ACM SIGGRAPH / Eurographics Conference on High-Performance Graphics},
pages = {33–37},
numpages = {5},
location = {Paris, France},
series = {EGGH-HPG'12}
}

@article{Morrical2017Parallel,
title = {Parallel quadtree construction on collections of objects},
journal = {Computers and Graphics},
volume = {66},
pages = {162-168},
year = {2017},
note = {Shape Modeling International 2017},
issn = {0097-8493},
doi = {https://doi.org/10.1016/j.cag.2017.05.024},
author = {Nathan Morrical and John Edwards}
}

@article{lewiner2010fast,
  author    = {Thomas Lewiner and Vin{\'\i}cius Mello and Adelailson Peixoto and Sinesio Pesco and H{\'e}lio Lopes},
  title     = {Fast Generation of Pointerless Octree Duals},
  journal   = {Computer Graphics Forum},
  volume    = {29},
  number    = {5},
  pages     = {1661--1669},
  year      = {2010},
  doi       = {10.1111/j.1467-8659.2010.01775.x},
  publisher = {Wiley Online Library}
}

@article{sugiura2023zordered,
  author    = {Kento Sugiura and Yoshiharu Ishikawa},
  title     = {Z-ordered Range Refinement for Multi-dimensional Range Queries},
  journal   = {CoRR},
  volume    = {abs/2305.12732},
  year      = {2023},
  url       = {https://arxiv.org/abs/2305.12732},
  eprint    = {2305.12732},
  archivePrefix = {arXiv},
  primaryClass   = {cs.DB}
}

@article{HaverkortSFC,
  archivePrefix = {arXiv},
  author       = {Herman J. Haverkort},
  title        = {Sixteen space-filling curves and traversals for d-dimensional cubes
                  and simplices},
  journal      = {CoRR},
  volume       = {abs/1711.04473},
  year         = {2017},
  url          = {http://arxiv.org/abs/1711.04473},
  eprinttype    = {arXiv},
  eprint       = {1711.04473}
}

@article{Chang1995Parallel,
author = {Chang, Henry Ker-Chang and Liou, Chung-Yu},
title = {Parallel Implementation of Linear Quadtree Codes Using the nCube 2 Supercomputer System},
year = {1995},
issue_date = {September 1995},
publisher = {Sage Publications, Inc.},
address = {USA},
volume = {9},
number = {3},
issn = {1094-3420},
doi = {10.1177/109434209500900304},
journal = {International Journal of High Performance Computing Applications},
month = sep,
pages = {220–231},
numpages = {12}
}

@inproceedings{tang2001linear,
  author    = {Tang, Sheng Yehua and Hong, Zhao and Xiaohu, Zhao},
  title     = {Linear Quadtree Encoding of Raster Data and Its Spatial Analysis Approaches},
  booktitle = {Proceedings of the International Cartographic Association Conference (ICC 2001)},
  year      = {2001},
  organization = {ICA Commission on Spatial Data Handling},
  url       = {https://icaci.org/files/documents/ICC_proceedings/ICC2001/icc2001/file/f11020.pdf}
}

@inproceedings{zhang2019efficient,
  author       = {Jianting Zhang and Le Gruenwald},
  title        = {Efficient Quadtree Construction for Indexing Large-Scale Point Data on GPUs: Bottom-Up vs. Top-Down},
  booktitle    = {Proceedings of the 10th International Workshop on Accelerating Analytics and Data Management Systems (ADMS'19)},
  year         = {2019},
  pages        = {–},
  address      = {Los Angeles, CA, USA},
  publisher    = {ACM},
  doi          = {},
  note         = {Indexing 170M points on GPU with Morton-based quadtree construction},
  url          = {https://www.adms-conf.org/2019-camera-ready/zhang_adms19.pdf}
}

@inproceedings{zhang2013inverted,
  author       = {Zhang, Chengyuan and Zhang, Ying and Zhang, Wenjie and Lin, Xuemin},
  title        = {Inverted Linear Quadtree: Efficient Top‑K Spatial Keyword Search},
  booktitle    = {Proceedings of the 29th IEEE International Conference on Data Engineering (ICDE)},
  year         = {2013},
  pages        = {1706--1717},
  doi          = {10.1109/ICDE.2013.6544884},
  publisher    = {IEEE}
}

@article{Sundar2008Bottom,
author = {Sundar, Hari and Sampath, Rahul S. and Biros, George},
title = {Bottom-Up Construction and 2:1 Balance Refinement of Linear Octrees in Parallel},
journal = {SIAM Journal on Scientific Computing},
volume = {30},
number = {5},
pages = {2675-2708},
year = {2008},
doi = {10.1137/070681727}
}

@misc{AlgorithmicaSTree,
  title        = {{S-Tree}},
  author       = {Slotin, Sergey},
  howpublished = {\url{https://en.algorithmica.org/hpc/data-structures/s-tree/}},
  year         = {2022},
  note         = {Accessed: 26 November 2025}
}

@misc{AbseilCPP,
  title        = {{abseil-cpp}},
  author       = {The Abseil Team},
  howpublished = {\url{https://github.com/abseil/abseil-cpp}},
  year         = {2022},
  note         = {Accessed: 26 November 2025}
}
\clearpage

\appendix

\section{Omitted proof}

The main body omits the proof of the following lemma:

\equal*

\begin{proof}
Consider a recursive space-filling curve $\sigma$ as in Definition~\ref{def:old}. Its recursive subdivision of $D$ into equal-sized squares induces a quadtree. We construct a mapping $\pi$ satisfying Definition~\ref{def:recursive_SFC} in a top-down fashion.
Each internal tree node has four children $S_1, S_2, S_3, S_4$ such that all points of $\sigma$ in $S_1$ precede those in $S_2$, and so on. We assign to $S_1, S_2, S_3, S_4$ the binary suffixes $00$, $01$, $10$, and $11$, respectively. The code $\pi(S_i)$ is defined as the parent’s code concatenated with the appropriate suffix. This results in a unique map $\pi$ for every $\sigma$ that satisfies Definition~\ref{def:recursive_SFC}.
Conversely, given a map $\pi$ over a $D \subset \mathbb{Z}^2$ (with diameter a power of two) that satisfies Definition~\ref{def:recursive_SFC}, we can induce a total order $\sigma$ on $D \cap \mathbb{Z}^2$ by sorting the points $p \in P$ by $\pi(p)$. This produces an ordering $\sigma$ satisfying Definition~\ref{def:old}.
\end{proof}

\section{Preliminary experiment}\label{app:prelim}
We ran a basic experiment with $10^6$ points sampled from a normal distribution and \numprint{1000} uniformly random queries to determine if algorithms are feasible. We include our implementations (\competitorHCDS,\competitorHCDSHilbert), one of the \texttt{Boost} libraries, as well the \texttt{CGAL} (\competitorCGALkd,\competitorCGALR).
The query window was set to $0.001$ and we measured construction  and query time. We limited the running time of the combined task to 60 seconds.

\subparagraph{Analysis.} The construction times are shown in Table~\ref{tab:app:prelim:const} and the query time in Table~\ref{tab:app:prelim:answering}.
Because \competitorCGALkd does not finish the combined times in 60 seconds, we  exclude it from further testing.
\competitorCGALR construction time is 37.62 times higher than \competitorZCurve, its querying is 3678 times slower than \competitorZCurve. We therefore exclude it from additional experiments.
We observe that \texttt{ParGeo} construction time is very competitive, even lower than all other implementations.
However, answering only 1000 queries takes more than a second, more than 2000 times slower than our \competitorHCDS. We therefore exclude \texttt{ParGeo} from the comparison.

\begin{table}[H]
{\small
\begin{tabular}{lrrrrrrrrr}
\toprule
Instance Name&Result Size&\competitorBOOSTl&\competitorCGALkd&\competitorCGALR&\competitorHCDS&\competitorHCDSHilbert&\texttt{ParGeo}\\
 \midrule
norm:1000000:unif:1000:4:0.001&\numprint{885}&0.176&\em OOT&4.327&0.115&0.159&\bfseries 0.095\\
norm:1000000:unif:1000:5:0.001&\numprint{922}&0.177&\em OOT&4.337&0.111&0.164&\bfseries 0.100\\
norm:1000000:unif:1000:2:0.001&\numprint{958}&0.178&\em OOT&4.328&0.111&0.162&\bfseries 0.099\\
norm:1000000:unif:1000:1:0.001&\numprint{992}&0.178&\em OOT&4.343&0.113&0.160&\bfseries 0.099\\
norm:1000000:unif:1000:3:0.001&\numprint{1061}&0.179&\em OOT&4.349&0.113&0.164&\bfseries 0.096\\

\bottomrule
\end{tabular}}
\caption{Average running time for (in seconds) to construct from $10^6$ points. Timeout of \numprint{60} seconds, marked \emph{OOT} if exceeded.}
\label{tab:app:prelim:const}
\end{table}
\begin{table}[H]
{\small
\begin{tabular}{lrrrrrrrrr}
\toprule
&Result Size&\competitorBOOSTl&\competitorCGALkd&\competitorCGALR&\competitorHCDS&\competitorHCDSHilbert&\texttt{ParGeo}\\
 \midrule
norm:1000000:unif:1000:4:0.001&\numprint{885}&0.006&\em OOT&2.402&\bfseries 6.530e-04&6.983e-04&1.458\\
norm:1000000:unif:1000:5:0.001&\numprint{922}&0.006&\em OOT&2.479&\bfseries 6.284e-04&8.522e-04&1.960\\
norm:1000000:unif:1000:2:0.001&\numprint{958}&0.004&\em OOT&2.416&\bfseries 5.957e-04&8.302e-04&1.663\\
norm:1000000:unif:1000:1:0.001&\numprint{992}&0.006&\em OOT&2.412&\bfseries 6.524e-04&7.616e-04&1.599\\
norm:1000000:unif:1000:3:0.001&\numprint{1061}&0.003&\em OOT&2.406&\bfseries 6.890e-04&7.300e-04&2.177\\

\bottomrule
\end{tabular}}
\caption{Average running time for (in seconds) to query 1000 times on $10^6$ points. Timeout of \numprint{60} seconds, marked \emph{OOT} if exceeded.}
\label{tab:app:prelim:answering}
\end{table}
\section{Additional results}\label{app:other:dist}
\subsection{Normal point generation- uniform sampling}
\begin{table}[H]
{\small
\begin{tabular}{lrrrrrrrrr}
\toprule
&\competitorBOOSTl&\competitorBOOSTq&\competitorBOOSTR&\competitorEHCDS&\competitorEHCDSHilbert&\competitorHCDS&\competitorHCDSHilbert&\competitorPAM&\competitorPKD\\
 \midrule
$10^{4}$&0.001&0.001&0.001&0.004&0.004&\bfseries 8.044973e-04&0.001&0.009&0.003\\
$10^{5}$&0.014&0.014&0.014&0.014&0.019&\bfseries 0.009&0.014&0.115&0.031\\
$10^{6}$&0.167&0.168&0.167&0.188&0.234&\bfseries 0.107&0.158&1.339&0.237\\
$10^{7}$&2.030&2.039&2.032&1.410&1.859&\bfseries 1.272&1.765&15.280&3.582\\
$10^{8}$&23.808&23.486&23.530&14.691&19.089&\bfseries 13.932&18.909&&31.445\\
$10^{9}$&278.262&287.905&278.497&165.838&208.411&\bfseries 157.403&209.523&&420.476\\

\bottomrule
\end{tabular}}
\caption{Average construction time for $10^i$ for $4\leq i\leq 9$.  Normal distribution of input points. \competitorPAM crashed for $10^8$ and $10^9$. Lowest marked \textbf{bold}.}
\label{tab:app:const:scaling}
\end{table}

\begin{table}[H]
{\small
\begin{tabular}{lrrrrrrrrr}
\toprule
&\competitorBOOSTl&\competitorBOOSTq&\competitorBOOSTR&\competitorEHCDS&\competitorEHCDSHilbert&\competitorHCDS&\competitorHCDSHilbert&\competitorPAM&\competitorPKD\\
 \midrule
$10^{4}$&0.120&0.117&0.116&\bfseries 0.022&0.031&0.043&0.093&0.324&0.499\\
$10^{5}$&0.210&0.205&0.202&\bfseries 0.071&0.089&0.089&0.140&1.174&0.824\\
$10^{6}$&0.376&0.370&0.366&\bfseries 0.202&0.223&0.204&0.255&4.541&1.280\\
$10^{7}$&0.878&0.874&0.881&0.681&\bfseries 0.663&0.684&0.679&15.334&2.493\\
$10^{8}$&2.430&2.371&2.387&2.300&\bfseries 2.139&2.389&2.340&&7.654\\
$10^{9}$&10.978&\bfseries 10.977&11.523&12.733&11.870&14.153&14.051&&84.949\\

\bottomrule
\end{tabular}
}
\caption{Average query time for answering $10^6$ queries on $10^i$ points for $4\leq i\leq 9$.  Normal distribution of input and uniform random querying. \competitorPAM crashed for $10^8$ and $10^9$. Lowest marked \textbf{bold}.}
\label{tab:app:query:scaling}
\end{table}
\begin{table}[H]
{\small
\begin{tabular}{lrrrrrrrrr}
\toprule
&\competitorBOOSTl&\competitorBOOSTq&\competitorBOOSTR&\competitorEHCDS&\competitorEHCDSHilbert&\competitorHCDS&\competitorHCDSHilbert&\competitorPAM&\competitorPKD\\
 \midrule
$10^{6}$&69.724&69.682&69.817&69.007&68.823&\bfseries 66.022&66.057&784.140&68.768\\
$10^{7}$&513.051&515.435&512.322&440.435&441.267&\bfseries 386.999&397.084&8770.376&441.507\\
$10^{8}$&5045.823&5045.052&5045.566&\bfseries 3920.887&3921.172&3921.281&3921.199&&4176.385\\
$10^{9}$&50360.043&50359.879&50366.320&\bfseries 38679.059&38679.418&38679.379&38679.116&&41684.934\\

\bottomrule
\end{tabular}}
\caption{Maximal resident memory (in MB) $10^6$ queries on $10^i$ points for $4\leq i\leq 9$. Normal distribution of input and uniform random querying. \competitorPAM crashed for $10^8$ and $10^9$. Lowest marked \textbf{bold}.}
\label{tab:app:mem:scaling}
\end{table}
\begin{table}[H]
{\small
\begin{tabular}{lrrrrrrrrr}
\toprule
$\delta$&\competitorBOOSTl&\competitorBOOSTq&\competitorBOOSTR&\competitorEHCDS&\competitorEHCDSHilbert&\competitorHCDS&\competitorHCDSHilbert&\competitorPAM&\competitorPKD\\
 \midrule
1.000000e-07&0.458&0.462&0.440&\bfseries 0.078&0.124&0.114&0.323&6.112&1.475\\
1.000000e-06&0.458&0.452&0.446&\bfseries 0.080&0.114&0.114&0.279&6.455&1.475\\
1.000000e-05&0.461&0.453&0.446&\bfseries 0.088&0.121&0.122&0.267&7.729&1.474\\
1.000000e-04&0.501&0.498&0.497&\bfseries 0.188&0.231&0.199&0.324&10.914&1.547\\
0.001&0.868&0.873&0.852&0.692&\bfseries 0.681&0.696&0.725&15.860&2.555\\
0.010&6.751&\bfseries 6.699&6.766&9.145&8.216&9.849&9.765&42.430&80.707\\
0.020&18.593&\bfseries 18.365&18.470&31.612&28.553&34.773&34.740&93.195&581.535\\

\bottomrule
\end{tabular}}
\caption{Average query time (in seconds) for answering $10^6$ uniformly sampled queries on $10^7$ points from a normal distribution with varying $10^{-7}\leq\delta\leq 0.02$. Lowest marked \textbf{bold}.}
\label{tab:app:query:scalingw}
\end{table}
\setlength{\tabcolsep}{0.25em}

\begin{table}[H]
{\small
\begin{tabular}{lllrrrrrrrrr}
\toprule
Name&Dist&\competitorBOOSTl&\competitorBOOSTq&\competitorBOOSTR&\competitorEHCDS&\competitorEHCDSHilbert&\competitorHCDS&\competitorHCDSHilbert&\competitorPAM&\competitorPKD\\
 \midrule
athens\_large&norm&0.015&0.015&0.014&\bfseries 0.004&0.005&0.035&0.084&0.549&0.087\\
athens\_large&skew&0.053&0.048&0.050&\bfseries 0.015&0.016&0.048&0.097&0.369&0.294\\
athens\_large&unif&0.031&0.029&0.029&\bfseries 0.008&0.009&0.037&0.090&0.336&0.154\\
athens\_small&norm&0.029&0.027&0.027&\bfseries 0.008&0.011&0.042&0.086&0.105&0.107\\
athens\_small&skew&0.047&0.045&0.045&\bfseries 0.009&0.013&0.032&0.087&0.132&0.202\\
athens\_small&unif&0.039&0.038&0.037&\bfseries 0.008&0.010&0.042&0.087&0.131&0.140\\
berlin\_large&norm&0.035&0.034&0.033&\bfseries 0.014&0.018&0.043&0.099&0.299&0.158\\
berlin\_large&skew&0.134&0.130&0.131&\bfseries 0.058&0.069&0.090&0.140&0.730&0.468\\
berlin\_large&unif&0.081&0.081&0.081&\bfseries 0.036&0.041&0.064&0.115&0.711&0.287\\
chicago&norm&0.016&0.015&0.015&\bfseries 0.006&0.007&0.041&0.092&0.614&0.076\\
chicago&skew&0.085&0.087&0.082&\bfseries 0.031&0.036&0.070&0.118&0.924&0.380\\
chicago&unif&0.042&0.042&0.041&\bfseries 0.015&0.018&0.050&0.102&0.815&0.180\\
drifter&norm&0.013&0.014&0.017&\bfseries 0.009&0.010&0.060&0.104&0.924&0.064\\
drifter&skew&0.135&0.134&0.125&0.086&\bfseries 0.082&0.119&0.180&2.738&0.452\\
drifter&unif&0.100&0.104&0.107&0.064&\bfseries 0.063&0.115&0.153&2.653&0.304\\
tiger2006east&norm&0.167&0.206&0.165&0.179&\bfseries 0.145&0.230&0.263&10.909&0.819\\
tiger2006east&skew&0.234&0.273&0.262&0.225&\bfseries 0.197&0.267&0.331&13.212&1.204\\
tiger2006east&unif&0.250&0.245&0.245&0.255&\bfseries 0.232&0.320&0.360&14.246&1.217\\
tiger2006se&norm&0.111&0.101&0.102&\bfseries 0.021&0.022&0.068&0.119&18.103&0.053\\
tiger2006se&skew&0.919&\bfseries 0.564&0.573&1.236&1.145&1.442&1.554&22.461&29.487\\
tiger2006se&unif&0.296&\bfseries 0.238&0.240&0.366&0.380&0.460&0.519&20.669&8.626\\
unid\_00&norm&0.060&0.059&0.058&\bfseries 0.033&0.038&0.066&0.116&0.963&0.272\\
unid\_00&skew&0.085&0.084&0.084&\bfseries 0.048&0.052&0.079&0.135&1.336&0.440\\
unid\_00&unif&0.056&0.053&0.053&\bfseries 0.029&0.032&0.068&0.117&1.287&0.268\\

\bottomrule
\end{tabular}}
\caption{Average query time (in seconds) for answering $10^6$ queries on realworld data sets for $\delta=0.001$. Lowest marked \textbf{bold}.}
\label{tab:app:realworld:query}
\end{table}

\begin{table}[H]
{\small
\begin{tabular}{lllrrrrrrrrr}
\toprule
Name&\competitorBOOSTl&\competitorBOOSTq&\competitorBOOSTR&\competitorEHCDS&\competitorEHCDSHilbert&\competitorHCDS&\competitorHCDSHilbert&\competitorPAM&\competitorPKD\\
 \midrule
athens\_large&0.008&0.008&0.007&0.010&0.012&\bfseries 0.005&0.009&0.070&0.013\\
athens\_small&2.282e-04&2.329e-04&2.299e-04&0.002&0.002&\bfseries 1.756e-04&3.199e-04&0.002&0.001\\
berlin\_large&0.023&0.023&0.024&0.023&0.032&\bfseries 0.015&0.025&0.082&0.043\\
chicago&0.014&0.014&0.014&0.014&0.020&\bfseries 0.010&0.016&0.122&0.024\\
drifter&0.187&0.190&0.191&0.192&0.275&\bfseries 0.119&0.207&2.190&0.169\\
tiger2006east&2.544&2.565&2.533&1.768&2.550&\bfseries 1.618&2.423&24.951&2.757\\
tiger2006se&5.279&5.253&5.239&3.638&5.145&\bfseries 3.394&5.064&51.696&4.912\\
unid\_00&0.020&0.020&0.020&0.021&0.030&\bfseries 0.013&0.023&0.207&0.029\\

\bottomrule
\end{tabular}}
\caption{Average construction time (in seconds) for answering $10^6$ queries on real-world data sets for $\delta=0.001$. Lowest marked \textbf{bold}.}
\label{tab:app:realworld:const}
\end{table}

\newpage
\subsection{Skewed point generation- uniform sampling}
\begin{figure}
    \centering
    \includegraphics{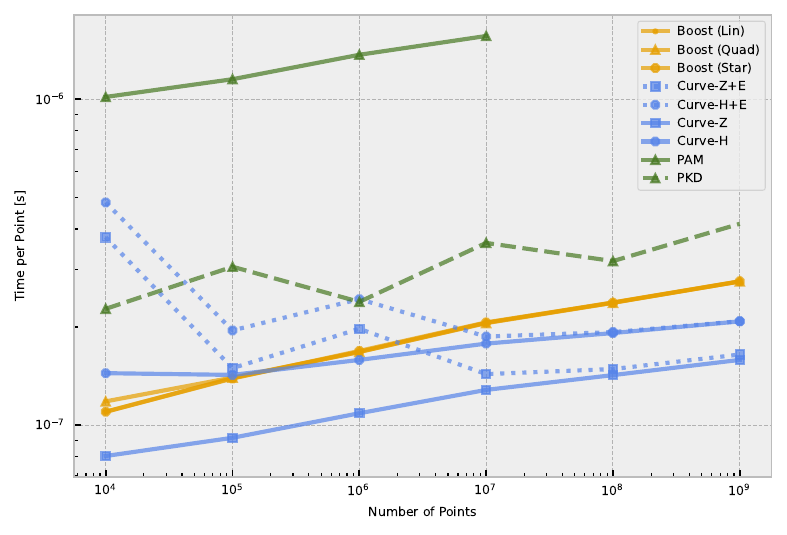}
    \caption{Time needed for constructing the data structure in seconds per point from $10^4$ to $10^9$ (Skewed distribution). }
    \label{fig:exp:norm:unif:scaling:constructionperpoint:skew}
\end{figure}
\begin{figure}
    \centering
    \includegraphics{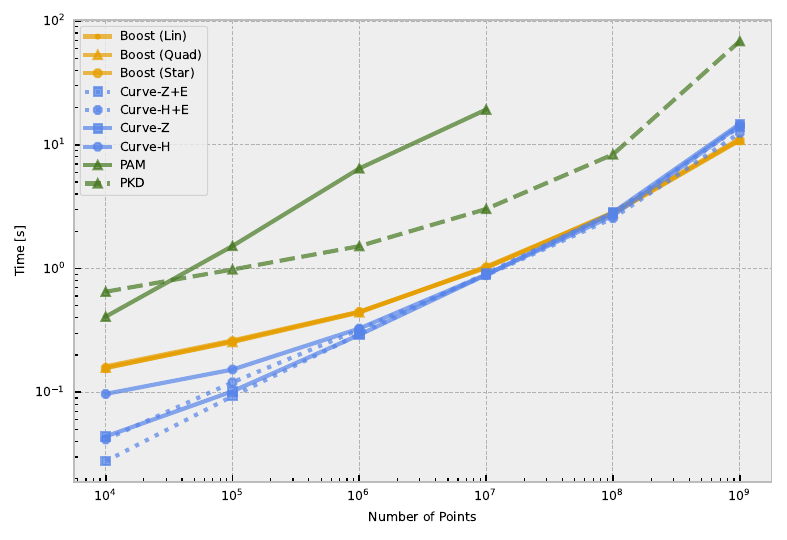}
    \caption{Overall query time to answer  $10^6$ uniform queries with size $\delta=0.001$ for $10^{4}$ to $10^9$ points (Skewed distribution).}
    \label{fig:exp:norm:unif:scaling:query:skew}
\end{figure}
\begin{figure}
    \centering
    \includegraphics{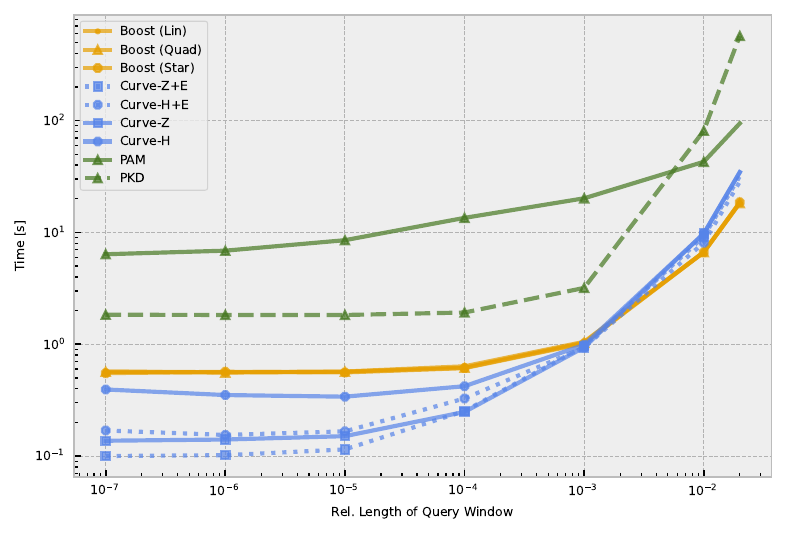}
    \caption{Plot of query time dependent on the relative length of the query window  in seconds (on $10^7$ points from skewed distribution).}
    \label{fig:exp:scalingw:query:skew}
\end{figure}
\begin{table}[H]
{\small
\begin{tabular}{lrrrrrrrrr}
\toprule
&\competitorBOOSTl&\competitorBOOSTq&\competitorBOOSTR&\competitorEHCDS&\competitorEHCDSHilbert&\competitorHCDS&\competitorHCDSHilbert&\competitorPAM&\competitorPKD\\
 \midrule

\bottomrule
\end{tabular}}
\caption{Average construction time for $10^i$ for $4\leq i\leq 9$.  Skewed distribution of input points. \competitorPAM crashed for $10^8$ and $10^9$. Lowest marked \textbf{bold}.}
\label{tab:app:const:scaling:skewed}
\end{table}

\begin{table}[H]
{\small
\begin{tabular}{lrrrrrrrrr}
\toprule
&\competitorBOOSTl&\competitorBOOSTq&\competitorBOOSTR&\competitorEHCDS&\competitorEHCDSHilbert&\competitorHCDS&\competitorHCDSHilbert&\competitorPAM&\competitorPKD\\
 \midrule

\bottomrule
\end{tabular}
}
\caption{Average query time for answering $10^6$ queries on $10^i$ points for $4\leq i\leq 9$.  Skewed distribution of input and uniform random querying. \competitorPAM crashed for $10^8$ and $10^9$. Lowest marked \textbf{bold}.}
\label{tab:app:query:scaling:skew}
\end{table}
\begin{table}[H]
{\small
\begin{tabular}{lrrrrrrrrr}
\toprule
$\delta$&\competitorBOOSTl&\competitorBOOSTq&\competitorBOOSTR&\competitorEHCDS&\competitorEHCDSHilbert&\competitorHCDS&\competitorHCDSHilbert&\competitorPAM&\competitorPKD\\
 \midrule

\bottomrule
\end{tabular}}
\caption{Average query time (in seconds) for answering $10^6$ uniformly sampled queries on $10^7$ points from a skewed distribution with varying $10^{-7}\leq\delta\leq 0.02$. Lowest marked \textbf{bold}.}
\label{tab:app:query:scalingw:skew}
\end{table}
\setlength{\tabcolsep}{0.25em}
\newpage

\subsection{Uniform point generation- uniform sampling}
\begin{figure}
    \centering
    \includegraphics{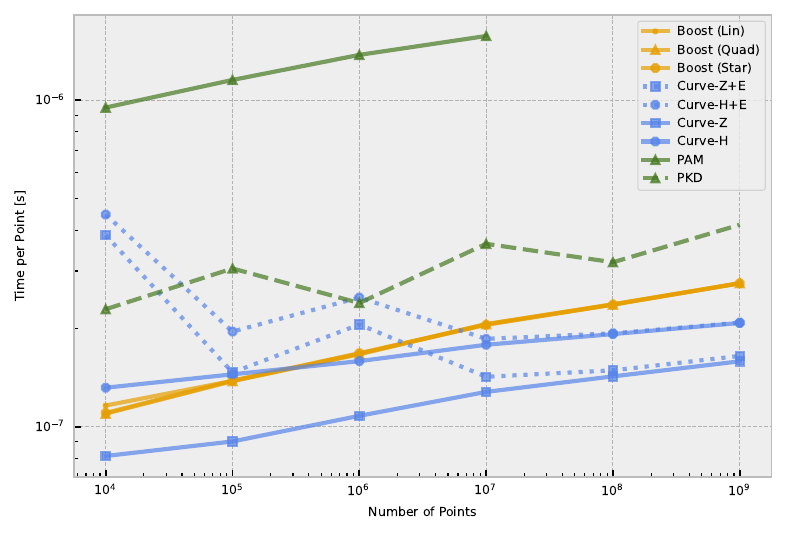}
    \caption{Time needed for constructing the data structure in seconds per point from $10^4$ to $10^9$ (uniform distribution). }
    \label{fig:exp:norm:unif:scaling:constructionperpoint:unif}
\end{figure}
\begin{figure}
    \centering
    \includegraphics{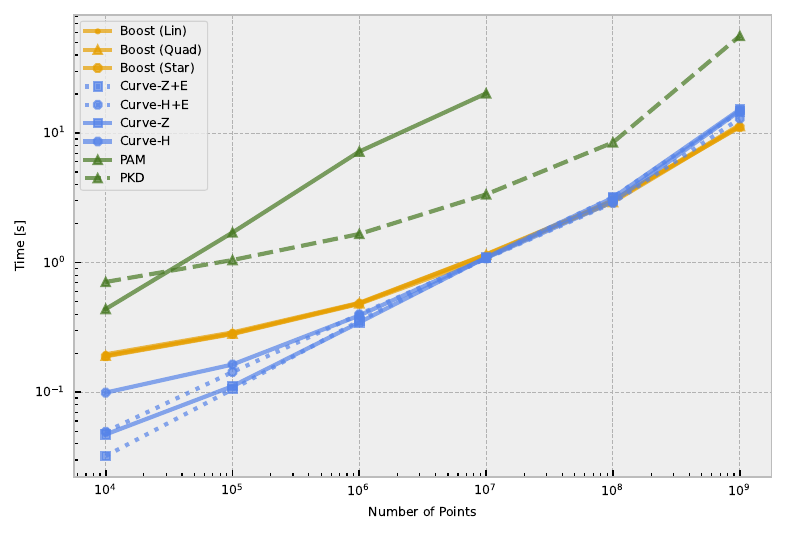}
    \caption{Overall query time to answer  $10^6$ uniform queries with size $\delta=0.001$ for $10^{4}$ to $10^9$ points (Uniform distribution).}
    \label{fig:exp:norm:unif:scaling:query:unif}
\end{figure}
\begin{figure}
    \centering
    \includegraphics{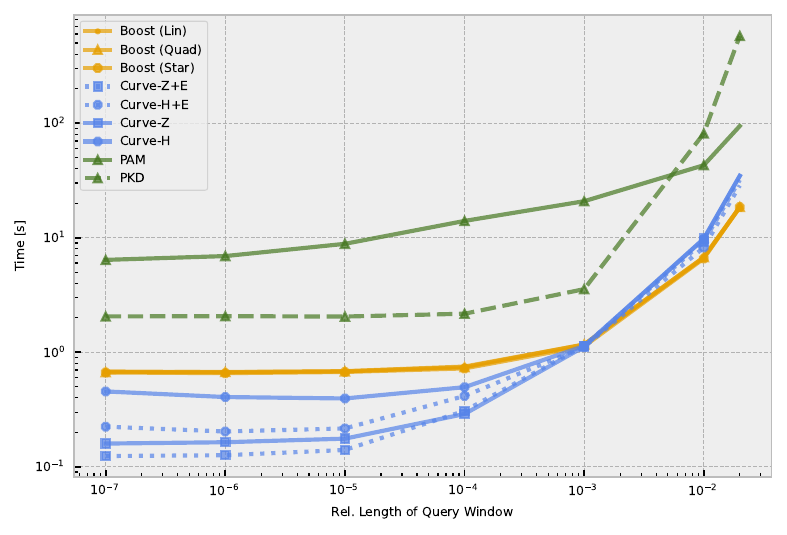}
    \caption{Plot of query time dependent on the relative length of the query window  in seconds (on $10^7$ points from unif distribution).}
    \label{fig:exp:scalingw:query:unif}
\end{figure}
\begin{table}[H]
{\small
\begin{tabular}{lrrrrrrrrr}
\toprule
&\competitorBOOSTl&\competitorBOOSTq&\competitorBOOSTR&\competitorEHCDS&\competitorEHCDSHilbert&\competitorHCDS&\competitorHCDSHilbert&\competitorPAM&\competitorPKD\\
 \midrule

\bottomrule
\end{tabular}}
\caption{Average construction time for $10^i$ for $4\leq i\leq 9$.  Uniform distribution of input points. \competitorPAM crashed for $10^8$ and $10^9$. Lowest marked \textbf{bold}.}
\label{tab:app:const:scaling:unif}
\end{table}

\begin{table}[H]
{\small
\begin{tabular}{lrrrrrrrrr}
\toprule
&\competitorBOOSTl&\competitorBOOSTq&\competitorBOOSTR&\competitorEHCDS&\competitorEHCDSHilbert&\competitorHCDS&\competitorHCDSHilbert&\competitorPAM&\competitorPKD\\
 \midrule

\bottomrule
\end{tabular}
}
\caption{Average query time for answering $10^6$ queries on $10^i$ points for $4\leq i\leq 9$.  Uniform distribution of input and uniform random querying. \competitorPAM crashed for $10^8$ and $10^9$. Lowest marked \textbf{bold}.}
\label{tab:app:query:scaling:unif}
\end{table}
\begin{table}[H]
{\small
\begin{tabular}{lrrrrrrrrr}
\toprule
$\delta$&\competitorBOOSTl&\competitorBOOSTq&\competitorBOOSTR&\competitorEHCDS&\competitorEHCDSHilbert&\competitorHCDS&\competitorHCDSHilbert&\competitorPAM&\competitorPKD\\
 \midrule

\bottomrule
\end{tabular}}
\caption{Average query time (in seconds) for answering $10^6$ uniformly sampled queries on $10^7$ points from a uniform distribution with varying $10^{-7}\leq\delta\leq 0.02$. Lowest marked \textbf{bold}.}
\label{tab:app:query:scalingw:unif}
\end{table}
\end{document}